\documentclass[lettersize,journal]{IEEEtran}
\usepackage{amsmath,amsfonts}
\usepackage{amssymb}
\usepackage{algorithmic}
\usepackage[ruled,vlined]{algorithm2e}
\usepackage{array}
\usepackage{xcolor}
\usepackage[caption=false,font=normalsize,labelfont=sf,textfont=sf]{subfig}
\usepackage{stfloats}
\usepackage{url}
\usepackage{verbatim}
\usepackage{graphicx}
\usepackage{color}
\usepackage{caption}
\captionsetup{font={footnotesize}}
\usepackage{epstopdf}
\usepackage{cite}
% \captionsetup[figure]{skip=1pt}
\usepackage{bm}
\hyphenation{op-tical net-works semi-conduc-tor IEEE-Xplore}
%\addbibresource{Reference.bib}
% updated with editorial comments 8/9/2021
\begin{document}
\newtheorem{theorem}{Theorem}
\newtheorem{remark}{Remark}
\newenvironment{proof}{{\indent \it Proof:}}{\hfill $\blacksquare$\par}
\title{Cram{\'e}r-Rao Bound Analysis and Beamforming Design for Integrated Sensing and Communication with Extended Targets}

%Integrated Sensing and Communication Beamforming Design: CRB Optimization for Extended Target
%\author{IEEE Publication Technology,~\IEEEmembership{Staff,~IEEE,}
        % <-this % stops a space
%\thanks{This paper was produced by the IEEE Publication Technology Group. They are in Piscataway, NJ.}% <-this % stops a space
%\thanks{Manuscript received April 19, 2021; revised August 16, 2021.}}

% The paper headers
%\markboth{Journal of \LaTeX\ Class Files,~Vol.~14, No.~8, August~2021}%
%{Shell \MakeLowercase{\textit{et al.}}: A Sample Article Using IEEEtran.cls for IEEE Journals}

% \IEEEpubid{0000--0000/00\$00.00~\copyright~2021 IEEE}
% Remember, if you use this you must call \IEEEpubidadjcol in the second
% column for its text to clear the IEEEpubid mark.
\author{Yiqiu~Wang,~\IEEEmembership{Graduate~Student~Member,~IEEE, }Meixia~Tao,~\IEEEmembership{Fellow,~IEEE, }and~Shu~Sun,~\IEEEmembership{Member,~IEEE}\vspace{-1em} % 移到这里可能有助于避免空白页
\thanks{The authors are with the Department of Electronic Engineering and the Cooperative Medianet Innovation Center (CMIC), Shanghai Jiao Tong University, China (e-mail: wyq18962080590, mxtao, shusun@sjtu.edu.cn).

Part of this work was presented in the 8th workshop on integrated sensing and communications for internet of things at the IEEE Global Telecommunications Conference (GLOBECOM), Kuala Lumpur, Malaysia, Dec. 2023 \cite{ref1}.} }
\maketitle

\makeatletter
\newcommand{\linebreakand}{%
  \end{@IEEEauthorhalign}
  \hfill\mbox{}\par
  \mbox{}\hfill\begin{@IEEEauthorhalign}
}
\makeatother

\begin{abstract}
This paper studies an integrated sensing and communication (ISAC) system, where a multi-antenna base station transmits beamformed signals for joint downlink multi-user communication and radar sensing of an extended target (ET). By considering echo signals as reflections from valid elements on the ET contour, a set of novel Cramér-Rao bounds (CRBs) is derived for parameter estimation of the ET, including central range, direction, and orientation. The ISAC transmit beamforming design is then formulated as an optimization problem, aiming to minimize the CRB associated with radar sensing, while satisfying a minimum signal-to-interference-pulse-noise ratio requirement for each communication user, along with a 3-dB beam coverage constraint tailored for the ET. To solve this non-convex problem, we utilize semidefinite relaxation (SDR) and propose a rank-one solution extraction scheme for non-tight relaxation circumstances. To reduce the computation complexity, we further employ an efficient zero-forcing (ZF) based beamforming design, where the sensing task is performed in the null space of communication channels. Numerical results validate the effectiveness of the obtained CRB, revealing the diverse features of CRB for differently shaped ETs. The proposed SDR beamforming design outperforms benchmark designs with lower estimation error and CRB, while the ZF beamforming design greatly improves computation efficiency with minor sensing performance loss.
\end{abstract}

\begin{IEEEkeywords}
 Integrated sensing and communication, Cramér-Rao bound, transmit beamforming design, semidefinite relaxation, zero-forcing.
\end{IEEEkeywords}

\section{Introduction}
\IEEEPARstart{C}{ommunication} and sensing share similar development trends, including higher frequency bands, larger antenna arrays, and hardware miniaturization. Integrated sensing and communication (ISAC) \cite{ref2,ref3,ref4,ref5}, which aims to enable both functionalities in the same frequency band, has become one of the key usage scenarios of 6G. Among various ISAC systems, the combination of radar sensing and communication \cite{ref6,ref7,ref8,ref9,ref10,ref11} is a common form. However, the fundamental goals to be achieved by radar and communication systems are completely different. Traditional radar designs focus on extracting target information from the reflected echo signals, whereas communication aims to transmit information accurately at minimal cost. As such, a major design issue in ISAC is to accommodate the diverse design objectives of communication and radar sensing, where beamforming is considered as a critical technique. Existing ISAC waveform design can be categorized into three major classes: 1) radar-centric design, 2) communication-centric design, and 3) joint design.

Radar-centric designs embed communication signals into radar probing signals. Typical methods include pulse interval modulation (PIM), index modulation, and beampattern modulation \cite{ref12,ref13,ref14}. PIM \cite{ref12} dates back to the early works of missile range instrumentation radars, where certain pulses are selected to establish a one-way communication system. Index modulation \cite{ref13} generalizes PIM by utilizing multiple waveform features, for instance subcarriers, spreading codes, and polarization states, as indices to modulate communication data. For beampattern modulation \cite{ref14}, the sensing functionality is guaranteed by the main beam which directly points to the radar target, whereas communication signals are embedded in the variation of radar beampattern sidelobes. As such, for radar-centric designs, we can observe a good preservation of the primary sensing functionality, whereas the communication performance, for instance data rate and error rate, can barely support practical demands.

On the other hand, communication-centric designs utilize existing communication waveforms to perform radar sensing tasks, among which orthogonal frequency division multiplexing (OFDM) \cite{ref15} and orthogonal time frequency space (OTFS) \cite{ref16} are deemed as two promising candidates. For instance, a two-phase scheme is proposed in \cite{ref15} for device-free sensing in OFDM cellular networks, where the delay and range of the target are estimated based on the reflected OFDM signals through base station \textcolor{black}{(BS)}-target-BS paths. The authors in \cite{ref16} employ OTFS signals for target detection, where the target range and velocity can be effectively obtained via maximum likelihood estimator in the delay-Doppler domain. Nevertheless, for communication-centric designs, the randomness of communication signals inevitably leads to degraded sensing performance.

To address the aforementioned issues, recently there have been many research efforts \cite{ref17,ref18,Hua23TVT,Liu20TSP,ref19} devoted to the beamforming design for joint radar sensing and communication. One typical approach is to take into account both communication and radar sensing metrics in the beamforming optimization problem, thereby achieving performance tradeoff for communication and sensing through the scheduling of wireless resources. While data rate and bit error rate are commonly employed to evaluate the communication performance, radar performance metrics are quite diversified, including mutual information \cite{ref17}, radar beampattern similarity\textcolor{black}{\cite{ref18,Hua23TVT,Liu20TSP}}, sidelobe-to-mainlobe ratio \cite{ref19}, and Cramér-Rao bound (CRB) \cite{ref20,Liu21TSP,ref22,Garcia22TSP}. In this work, we use the widely acknowledged CRB as the sensing performance indicator, which defines the lower bound on the variance of any unbiased estimator and serves as a benchmark for the evaluation and design of ISAC systems.

The authors in \cite{ref20,Liu21TSP} derive the CRB on the direction of arrival (DoA) estimation of the point radar target by extracting sufficient statistic variables from the echo signals. However, it is usually implicit to represent the sensing target as a single point, i.e. point target (PT), in a similar manner to the modeling of a communication user (CU), especially when the target is located near the BS with a significant physical size. In this spirit, extended target (ET) is increasingly accepted by the ISAC community with specifically-derived CRB formulas and beamforming designs. For instance, suppose there exists line-of-sight (LoS) between the ET and the BS, the authors in \cite{Liu21TSP} estimate the whole response matrix and obtain a closed-form CRB expression as a function of the beamforming matrix. The authors in \cite{ref22} consider the \textcolor{black}{non-line-of-sight} (NLoS) ISAC scenario and derive the response matrix estimation CRB for reconfigurable intelligent surface aided sensing. Although further information, such as target range or direction, can be extracted from the response matrix with sophisticated signal processing algorithms, it is unintuitive and imprecise to use CRB on the intermediate response matrix to represent estimation accuracy of the true desired parameters.

In regard of the problem above, the authors in \cite{Garcia22TSP} analyze the CRBs for the explicit estimation of the central range, direction, and orientation of an ET with both known and unknown contours. Nevertheless, this work merely considers the single-input multiple-output (SIMO) scenario with the radar sensing task only. Furthermore, the explicit CRB formula is also too complicated to construct a solvable optimization problem, especially when the communication function is integrated with the sensing task. To the best of our knowledge, no prior work has studied the CRB of direct estimation for ET parameters (i.e. range, direction, and orientation) or the associated optimization in the multi-user multiple-output multiple-input (MU-MIMO) ISAC scenario.

In this paper, we propose a transmit beamforming design framework for MU-MIMO ISAC systems, by specially taking into account the optimization of CRB for ET sensing. The ISAC system consists of one MIMO BS with a uniform linear array (ULA), one ET modelled by a truncated Fourier series (TFS) contour, and multiple CUs. We assume that the channel state information, the transmitted signal, and DoAs of CUs are perfectly known by the BS for downlink communication, while the direction of the ET is to be estimated at the BS using echo signals. We aim to minimize the CRB of ET direction estimation by optimizing the transmit beamforming at the BS, under certain communication and beampattern constraints.

The main contributions of this work are summarized as follows:

\begin{itemize}
    \item First, we characterize, in closed-form expressions, the CRBs for the range, direction, and orientation estimation performance of the ET. Compared with \cite{Garcia22TSP}, our CRB analysis considers a more general MIMO ISAC scenario with the transmit beamforming. Theoretical analysis unveils the explicit dependence of CRBs upon sensing path loss, noise power, the number of transceiver antennas, and signal covariance matrix. We also establish an interesting connection between the CRB of ET and that of PT.
    \item Second, we formulate a CRB minimization problem by optimizing the transmit beamforming, subject to the minimum signal-to-interference-plus noise ratio (SINR) constraint for all CUs along with a distinct beam coverage constraint for the ET. Due to the non-convex and quadratic structure of the CRB, we propose a semidefinite relaxation (SDR) based algorithm to obtain the near-optimal solution, which also consists of a novel procedure to extract rank-one beamformers for non-tight relaxation scenarios.
    \item Third, to relieve the computational burden of SDR beamforming, we propose a low-complexity beamforming algorithm based on zero-forcing (ZF) the interference between CUs. By splitting the beamforming matrix into the pseudoinverse and null space components of the communication channel, we align the null space components towards varied points on the ET contour to perform sensing task. Consequently, the matrix variables in the SDR problem degrade into vector variables, leading to an efficient ZF beamforming algorithm.
    \item Finally, we numerically analyze the CRB of ET and evaluate the performance of the proposed beamforming designs. The diverse characteristics of CRB are explored for ETs with different shapes. Further, compared with existing beamforming designs, our proposed CRB-based design effectively boosts sensing performance in both estimation mean square error (MSE) and CRB. Moreover, the ZF design trades minor CRB loss for great efficiency improvement.
\end{itemize}

The remainder of this paper is organized as follows. Section II introduces the ISAC system model and the TFS contour model for ET. Section III presents the derivation of CRBs. Section IV proposes the ISAC beamforming optimization problem, together with the distinct SDR and ZF beamforming design algorithms. Section V outlines the numerical results. Finally, Section VI concludes the paper.

\textit{Notations:} $\left[\cdot\right]^T$, $\left[\cdot\right]^H$, $\left[\cdot\right]^*$ denote, respectively, the transpose, Hermitian transpose, and conjugate of a matrix; $\mathbb{E}\left[\cdot\right]$ denotes the mean of variables; $\mathbf{0}_{m \times n}$ and $\mathbf{I}_{m}$ denote an ${m \times n}$ matrix with all zero elements, and an ${m \times m}$ identity matrix, respectively; $\mathcal{R\left(\cdot\right)}$ and $\mathcal{I\left(\cdot\right)}$ respectively denote the real and imaginary part of a complex number; $\mathcal{CN}\left(\mathbf{0}_{m\times 1},\sigma^{2}\mathbf{I}_m\right)$ denotes the probability density of an ${m\times 1}$ circularly symmetric complex Gaussian vector with zero mean and covariance matrix $\sigma^{2}\mathbf{I}_m$; $\mathcal{U}(-\pi/2,\pi/2)$ denotes the uniform distribution over the interval from $-\pi/2$ to $\pi/2$; $\mathbb{R}^{m\times n}$ and $\mathbb{C}^{m\times n}$ denote a matrix with ${m\times n}$ real elements and ${m\times n}$ complex elements, respectively; $\Delta _{{{\boldsymbol{\theta} }_{1}}}^{{\boldsymbol{\theta }_{2}}}f\left[ \cdot \right]=\frac{\partial}{\partial\boldsymbol{\theta}_1}\frac{\partial}{\partial\boldsymbol{\theta}_2^T}f\left[ \cdot \right]$ denotes the second derivative over ${\boldsymbol{\theta} }_{1}$ and ${\boldsymbol{\theta}}_{2}$; $\mathbf{A} \succeq 0$ denotes that matrix $\mathbf{A}$ is semi-definite; $\Vert\cdot\Vert$ denotes the $l_2$ norm.

\section{System Model}
As depicted in Fig. \ref{Fig1}, we consider a downlink ISAC system. A monostatic MIMO radar and an MIMO communication transmitter are integrated inside a BS, which is equipped with {$N_t$} transmit antennas and {$N_r$} receive antennas. The BS sends wireless signals to perform radar sensing of an ET and data communication with $N_c$ single-antenna CUs simultaneously. The CUs are capable of decoding communication messages based on their own received signals. At the same time, the BS receives echo signals reflected from the surface of the ET, from which the unknown parameters of the ET, including the range, direction, and orientation, can be extracted with specific
methods. To guarantee the feasibility of basic radar sensing and communication functions, we assume $N_c \leq N_t \leq N_r$ throughout the paper.

In the sequel, we first introduce the signal model for communication, followed by the description of the contour modeling for ET, as well as the sensing signal model.

\begin{figure}[!t]
\centering
\includegraphics[width=3in]{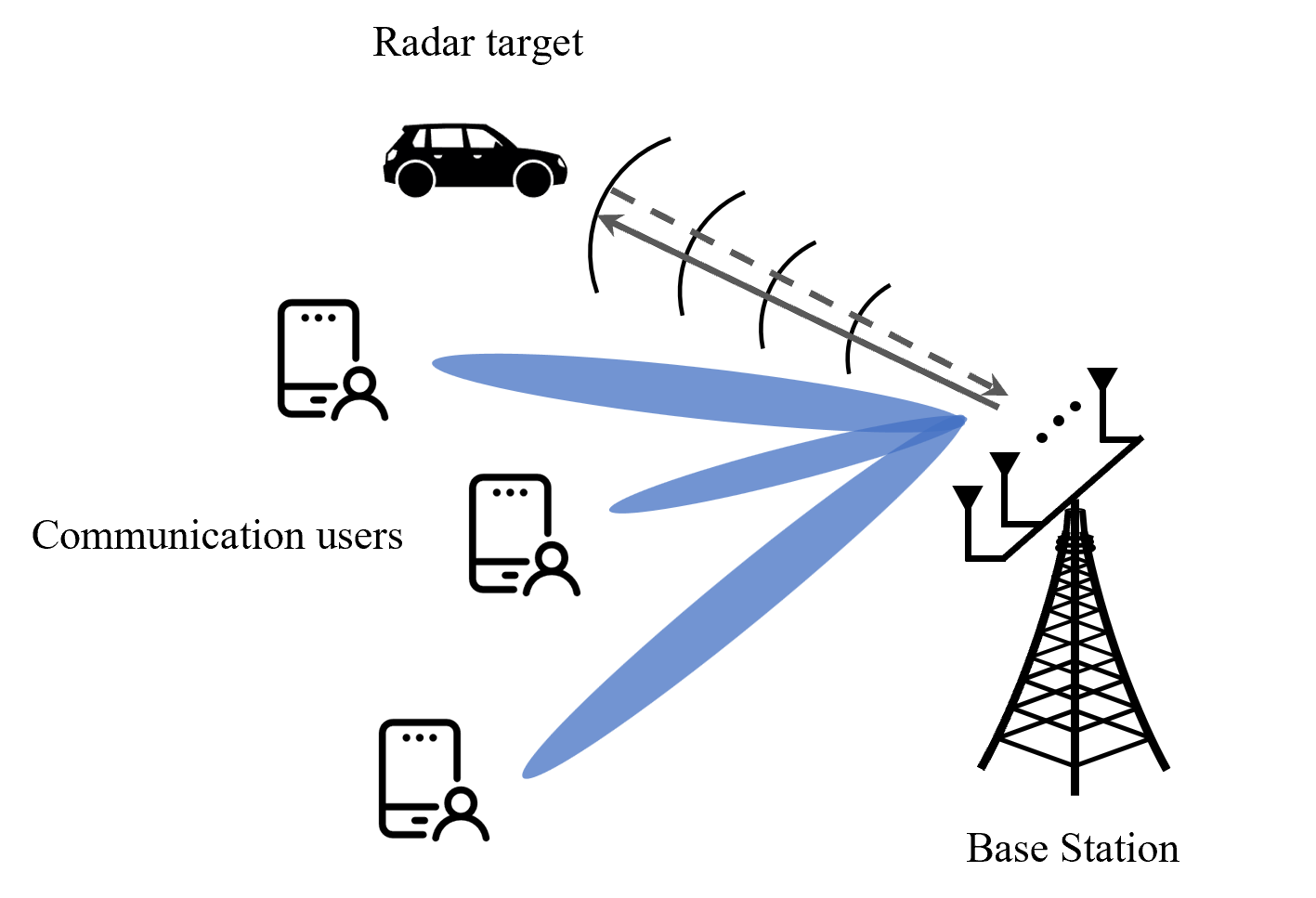}
\caption{\textcolor{black}{Monostatic ISAC system in which one BS performs communication and sensing tasks simultaneously.}}
\label{Fig1}
\vspace{-10pt}
\end{figure}

\subsection{Communication Signal Model}
The BS performs sensing based on existing communication protocols, where all transmit antennas are utilized to send only communication signals for ISAC tasks. The transmit signal can be expressed as
\begin{equation}
\mathbf{x}(t) = \sum\nolimits_{n=1}^{N_c}\mathbf{w}_{n}{c}_{n}(t),
\end{equation}
where $\mathbf{w}_n \in \mathbb{C}^{N_{t}}$ and ${c}_{n}(t)$ are the beamforming vector and communication signal for the $n$-th CU, respectively. The communication signals are generated by wide-sense stationary stochastic process with zero-mean and unit power, and they are uncorrelated for different CUs. That is,
\begin{equation}
\label{eq2}
\mathbb{E} \left[ \mathbf{c}(t) \mathbf{c}^H (t) \right] = \mathbf{I}_{N_c},
\end{equation}
where $\mathbf{c}(t) = \left[{c}_{1}(t),...,{c}_{N_c}(t) \right]^{T} \in \mathbb{C}^{N_c}$ is the communication symbol vector. Let $\mathbf{W} = \left[\mathbf{w}_1,...,\mathbf{w}_{N_c} \right] \in \mathbb{C}^{N_{t}\times N_c}$ denote the transmit beamforming matrix. The covariance matrix of transmitted signals is
\begin{equation}
\label{eq_3}
\mathbf{R}_x = \mathbb{E} \left[\mathbf{x}(t) \mathbf{x}^H (t)\right] = \mathbf{W} \mathbf{W}^H.
\end{equation}

The received signal $y_n(t)$ at the $n$-th CU is expressed as
\begin{equation}%会产生编号
y_n(t) = \mathbf{h}_n^H \mathbf{x}(t) + {z}_{n}(t) = \mathbf{h}_n^H \sum\nolimits_{n=1}^{N_c} \textcolor{black}{\mathbf{w}_{n}}c_{n}(t) + {z}_{n}(t),
\end{equation}
where ${z}_n(t)$ is the additive white Gaussian noise (AWGN) with zero-mean and $\sigma_c^2$ variance, and $\mathbf{h}_n \in \mathbb{C}^{N_t}$ is the communication channel vector. We adopt the narrow-band assumption since the frequency selectivity of the channel is negligible when $B \ll f_c$, where $B$ is the effective bandwidth of the ISAC system, $f_c$ is the carrier frequency. By following the Saleh-Valenzuela model, the channel vector can be expressed as

\begin{equation}
\mathbf{h}_n = \sqrt{g_n} \sum\nolimits_{l=1}^L \beta_{l,n}\mathbf{a}\left(\phi_{l,n}^t\right),
\end{equation}
where $g_n$ is the large-scale path loss, $L$ denotes the number of multipath components between the CU and the BS, $\beta_{l,n}\sim\mathcal{CN}(0,\sigma_{l,n}^2)$ is the complex gain of the $l$-th path with the assumption of $\sigma_{1,c}^2 \geq \sigma_{2,n}^2 \geq ... \geq \sigma_{L,n}^2$ and $\sum_{l=1}^L \sigma_{l,n}^2=1$, $\phi_{l,n}^t \sim \mathcal{U}(-\pi/2,\pi/2)$ represent the directions of departure at the BS.

To guarantee the communication quality of each CU, the transmit beamformers should be designed to reduce the inter-user interference among different CUs, so as to achieve a certain level of SINR for all users. The SINR of the $n$-th CU can is given by

\begin{equation}
{{\gamma }_{n}}=\frac{{\left| \mathbf{h}_{n}^{H}{{\mathbf{w}}_{n}} \right|}^{2}}{\sum\nolimits_{i=1,i\ne n}^{N_c}{{{\left| \mathbf{h}_{n}^{H}{{\mathbf{w}}_{i}} \right|}^{2}}+\sigma _{n}^{2}}}.
\end{equation}

The achievable sum-rate of all CUs can be obtained as $\sum\nolimits_{n=1}^{N_c}{{{\log }_{2}}\left( 1+{{\gamma }_{n}} \right)}$.

\subsection{Extended Target Contour Model}
\begin{figure}[!t]
\centering
\includegraphics[width=3.5in]{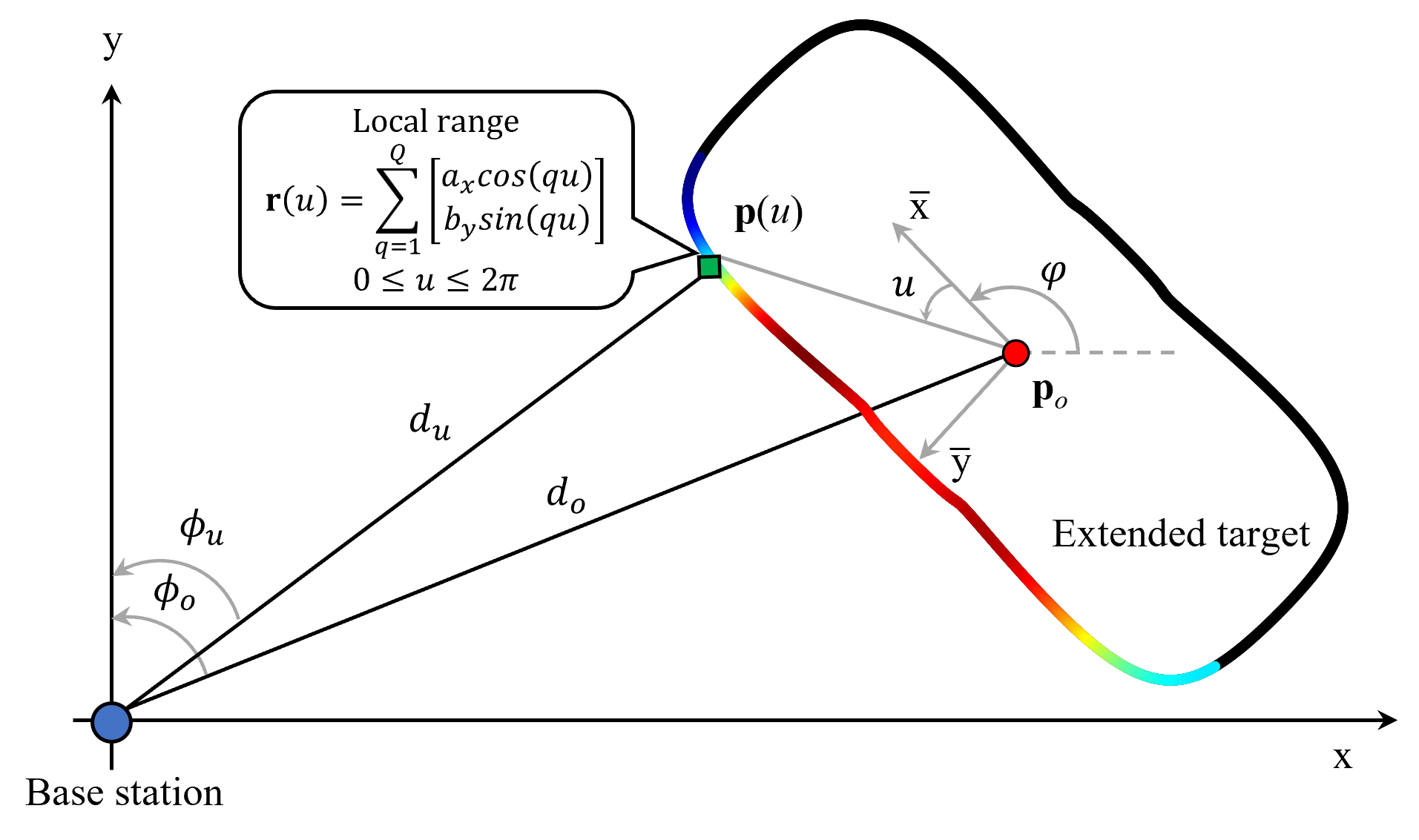}
\caption{Bird-eye view of an ET with TFS contour \cite{Garcia22TSP}.}
\label{Fig2}
\end{figure}

In this work, the received echo signals at the BS are assumed to be reflected from the ET contour elements with LoS to the BS, and the reflection from interior components of the ET is ignored considering the significant penetration loss. Consequently, the choice of contour modeling directly affects the sensing performance. Various ET contour models have been proposed in the radar sensing literature, e.g. multiple ellipses \cite{ref24,ref25}, B-spline curve \cite{ref26}, and Gaussian process \cite{ref27}. These models can capture useful details such as the rounded corners of ET, yet typically come at the cost of notably increased complexity and interpretability lackage with respect to the ET parameters. Instead, in this work we build on the TFS model proposed in \cite{Garcia22TSP,ref28}, which utilizes a tractable number of parameters to describe the ET contour and effectively captures the backscattering effects generated by a waveform impinging on an arbitrarily shaped ET.

As presented in the bird-eye-view of Fig. \ref{Fig2}, the ET is located at the origin of the local coordinate system where the $+\overline{x}$ axis is aligned with the heading of the target. The target orientation is defined as the angle $\varphi$ from the global $+x$ axis to the local $+\overline{x}$ axis. The local coordinate of each contour element is represented by the TFS with $2Q$ coefficients as a function of local direction $u$. Particularly, while the general TFS has both sine and cosine harmonics, we project each contour element onto the local $\overline{x}$ and $\overline{y}$ axis, and represent the corresponding $x$ and $y$ components in the local coordinate with only cosine and sine harmonics, respectively. The above modelling scheme can effectively reduce the number of necessary contour coefficients, especially for targets with a symmetric contour. As such, the contour can be generated for local direction $u\in \left[0,2\pi\right]$, by
\begin{equation}
\boldsymbol{\rho}\left( u \right)=\sum\limits_{q=1}^{Q}{\left[ \begin{matrix}
    {{a}_{q}}\cos \left( qu \right) \\ 
    {{b}_{q}}\sin \left( qu \right) \\ 
\end{matrix} \right]}=\left[ \begin{matrix}
   {\textcolor{black}{{\boldsymbol{\nu }}}^{T}}\mathbf{m}  \\
   {{\boldsymbol{\varsigma }}^{T}}\mathbf{n}  \\
\end{matrix} \right],
\end{equation}
where $\boldsymbol{\nu }={{\left[ \cos \left( u \right),...,\cos \left( Qu \right) \right]}^{T}}$ is the cosine harmonics, $\boldsymbol{\varsigma }={{\left[ \sin \left( u \right),...,\sin \left( Qu \right) \right]}^{T}}$ is the sine harmonics, $\mathbf{m}={{\left[ {{a}_{1}},...,{{a}_{Q}} \right]}^{T}}$is the TFS cosine coefficient and $\mathbf{n}={{\left[ {{b}_{1}},...,{{b}_{Q}} \right]}^{T}}$ is the TFS sine coefficient. Since the target is assumed to be located at the \textcolor{black}{origin} of the local frame, the zeroth order TFS coefficients ${{a}_{0}}$ and ${{b}_{0}}$ are not considered here. To obtain an anti-clockwise cycling contour, the coefficients of the first harmonic should satisfy ${{a}_{1}}>0$, ${{b}_{1}}>0$. While the modified TFS naturally captures the symmetry characteristic of the target contour, the number of TFS coefficients $Q$ determines the precision of the curve when approximating the ground-truth contour. Typically, when the contour details are out of consideration, e.g. in bandwidth shortage and distant location circumstances, it is more appropriate to assign $Q$ with a relatively small value. On the other hand, if the target is located close to the BS, or the contour is highly irregular with twists and turns, we tend to set a large $Q$ for better approximation.

Assume the BS is located at the origin of the global coordinate system and $+x$ axis is aligned with the antenna orientation. The target displacement can be presented as
\begin{equation}
{{\mathbf{p}}_{o}}={{\left[ {{d}_{o}}\cos {{\phi }_{o}},\text{   }{{d}_{o}}\sin {{\phi }_{o}} \right]}^{T}},
\end{equation}
where ${{d}_{o}}$ represents the range between the BS and the target center and ${{\phi }_{o}}$ is the target direction. For a specific element on the contour  with local direction $u$, its global displacement can be expressed as
\begin{equation}
\label{p(u)}
\mathbf{p}\left( u \right)={{\mathbf{p}}_{o}}+\mathbf{V}\boldsymbol{\rho}\left( u \right),
\end{equation}
where $\mathbf{V=}\left[ \begin{matrix} \cos \varphi  & -\sin \varphi \\ \sin \varphi  & \cos \varphi \\ \end{matrix} \right]$ is the spin matrix at orientation $\varphi$ with regard to the $+x$ axis in the global coordinate system. Finally, we obtain the complete target contour in the global coordinate, defined as $\mathcal{C}=\left\{ \mathbf{p}\left( u \right):0\le u\le 2\pi \right\}$.

\begin{remark}
\textcolor{black}{
   In the existing ISAC literature, an ET is sometimes treated as multiple virtual anchors and therefore the sensing of an ET is simply to estimate the multiple scatterers along the ET \cite{Du23TWC}, or the corresponding sensing response matrix \cite{Liu21TSP}. Such a method cannot, however, explicitly estimate the center parameters of the ET, e.g., $d_o$, $\phi_o$ and $\varphi$, which can be further used to recover all elements along the ET contour with $(\ref{p(u)})$. When tracking multiple scatterers along the ET, the detectable scatterers of faraway ET may abnormally vanish due to limited angular resolution. However, this barely affects the estimation of the ET center, since it relies on the echo signals regarding all visible ET contour elements. This is the main advantage of using the contour model to describe an ET, rather than using multiple scatterers, for target estimation or target tracking.}
\end{remark}

\subsection{Received Sensing Signal Model}
We consider a full-duplex system. The BS sends downlink communication signals and captures the echo signals reflected from the LoS contour of ET simultaneously. \textcolor{black}{The co-located transmitter and receiver antenna arrays are assumed to be perfectly decoupled, such that the sensing performance does not suffer from self-interference (SI) in the full-duplex operation.}\footnote{\textcolor{black}{In practice, if SI cannot be canceled perfectly \cite{Barneto21IWC}, we can further reduce the impact of SI by designing appropriate beamformers, for instance nulling the residual SI \cite{Liu23JSAC} or suppressing the beampattern gain in the receiver direction \cite{tang24arXiv}, which is beyond the scope of this paper.}} We define $\mathcal{C}_{LoS}$ and ${u}_{LoS}$ as the target LoS contour and local direction of LoS contour elements, respectively. While the target contour is a continuous curve on the 2D plane, we can split $\mathcal{C}_{LoS}$ into $K$ disjoint subsections with an angular interval of $\Delta u={{u}_{LoS}}/K$, then we have ${\mathcal{C}_{LoS}}=\bigcup _{k=1}^{K}{\mathcal{C}_{k}}$ where ${\mathcal{C}_{k}}=\left\{ \mathbf{p}\left( {{u}_{k}} \right),u_{LoS}^{lower}+\left( k-1 \right)\Delta u\le {{u}_{k}}\le u_{LoS}^{lower}+k\Delta u \right\}$. The local directions $u_{LoS}^{lower}\le {{u}_{1}}\le u_{LoS}^{lower}+\Delta u\le {{u}_{2}}\le ...\le {{u}_{K}}\le u_{LoS}^{upper}=u_{LoS}^{lower}+K\Delta u$ define a non-overlapping partition of the local LoS angular range $\left[ u_{LoS}^{lower},u_{LoS}^{upper} \right]$. Consequently, the received echo signal at the BS can be expressed as
\begin{align}
\textcolor{black}{\mathbf{s}\left( t \right)} &= \int_{\mathcal{C}_{LoS}} \mathbf{s}_{\boldsymbol{\rho}}(t) \, \mathrm{d} \boldsymbol{\rho} = \sum_{k=1}^{K} \mathbf{s}_{k}(t) \nonumber \\
&= g \sum_{k=1}^{K} \sqrt{l_{k}} \alpha_{k} \mathbf{b}\left( \phi_{k} \right) \mathbf{a}^{H}\left( \phi_{k} \right) \mathbf{x}\left( t - \frac{2 d_{k}}{c_0} \right)
\label{eq_ll8}
\end{align}
where $c_0$ is the speed of light, ${{\mathbf{s}}_{\boldsymbol{\rho}}}$ and ${{\mathbf{s}}_{{k}}}$ refer to the echo signals as a function of $\boldsymbol{\rho}$ and $\mathcal{C}_k$, $\alpha_{k}$, ${{\phi }_{k}}$, ${{d}_{k}}$ and $l_k$ refer to the complex radar cross section (RCS), global direction, range, and perimeter of $\mathcal{C}_k$, respectively, $g=1/d_{o}^{2}$ is the sensing path loss, $\mathbf{a}(\cdot)$ and $\mathbf{b}(\cdot)$ are the steering vectors of transmit and receive antennas, respectively. \textcolor{black}{Since the echo signals are originated from sufficiently separated scatterers on the ET, the RCS of scatterers can be conveniently modelled as independent Gaussian variables,\footnote{\textcolor{black}{One more reasonable RCS model for the ET takes the correlation between the scatterers along the ET contour into account, which formulates a complicated RCS function of incident and scattering angles \cite{Garcia22TSP}.}} i.e., $\alpha_{k}\sim \mathcal{CN}(0,1)$.} After mixing with AWGN in the wireless channel, the received sensing signal at the BS is
\begin{equation}
\label{eq14}
{{\mathbf{y}}_{s}}\left( t \right)=\mathbf{s}\left( t \right)+{{\mathbf{z}}_{s}}\left( t \right),
\end{equation}
where ${{\mathbf{z}}_{s}}\left( t \right)\in {\mathbb{C}^{{{N}_{r}}}}$ is the AWGN vector with zero-mean and $\sigma _{s}^{2}$ variance for each element.

\section{CRB analysis}

\textcolor{black}{Following the derivation framework in \cite[Section III]{Garcia22TSP}, we extend the CRB analysis from the SIMO scenario \cite{Garcia22TSP} to the generalized MIMO scenario in this section.}

\subsection{CRB for Extended Target}
We define $\boldsymbol{\xi }={{\left[ {{\boldsymbol{\alpha }}^{T}},g, {{\boldsymbol{\kappa }}^{T}} \right]}^{T}}$ as the set of unknown parameters. Here, $\boldsymbol{\alpha }=\left[ \mathcal{R}\left( {{\alpha }_{1}},{{\alpha }_{2}},...,{{\alpha }_{K}} \right),\mathcal{I}\left( {{\alpha }_{1}},{{\alpha }_{2}},...,{{\alpha }_{K}} \right) \right]^{T}\in {\mathbb{R}^{2K}}$ is a nuisance RCS vector parameter, $g$ is the sensing path loss term as defined before, and $\boldsymbol{\kappa }={{\left[ {{d}_{o}},{{\phi }_{o}},\varphi,\mathbf{m}^T, \mathbf{n}^T \right]}^{T}}\in {\mathbb{R}^{ 2Q+3 }}$ is the deterministic vector parameter of interest. As stated in (\ref{eq14}), the received echo signals are independent complex Gaussian vectors with ${{\mathbf{y}}_{s}}\left( t \right)\sim \mathcal{CN}\left( \mathbf{s}\left( t \right),\sigma _{s}^{2}{{\mathbf{I}}_{{{N}_{r}}}} \right)$. Within a certain observation period $t_s$, the log-likelihood function for estimating $\boldsymbol{\xi }$ from ${{\mathbf{y}}_{s}}\left( t \right)$ is presented as

\begin{align}
\log p\left( {{\mathbf{y}}_{s}}|\boldsymbol{\xi } \right)= &\frac{2}{{\sigma_s^2}}\mathcal{R}\int_{t_s}{\mathbf{y}_{s}^{H}\mathbf{s}\left( t \right)\mathrm{d}t}-\frac{1}{{\sigma_s^2}}\int_{t_s}{{{\left\| \mathbf{s}\left( t \right) \right\|}^{2}}\mathrm{d}t} \nonumber\\ 
& - \frac{1}{{\sigma_s^2}}\int_{t_s}{{{\left\| \mathbf{y}\left( t \right) \right\|}^{2}}\mathrm{d}t} -t_{s}N_{r}\text{log}\left(\pi\sigma_s^{2}\right).
\end{align}

According to the definition in \cite{Garcia22TSP}, we obtain the Fisher information matrix (FIM) of all parameters in $\boldsymbol{\xi}$ as
\begin{align}
\mathbf{J}\left( \boldsymbol{\xi} \right) & = -\mathbb{E} \left[ \Delta_{\boldsymbol{\xi }}^{\boldsymbol{\xi}} \log{p} \left( {\mathbf{y}}_{s},\boldsymbol{\alpha }| g,\boldsymbol{\kappa} \right) \right] \nonumber \\ 
 & =-\mathbb{E} \left[ \Delta _{\boldsymbol{\xi }}^{\boldsymbol{\xi }}\log p\left( {{\mathbf{y}}_{s}}|{\boldsymbol{\alpha} },g,\boldsymbol{\kappa } \right) \right]-\mathbb{E} \left[ \Delta _{\boldsymbol{\xi }}^{\boldsymbol{\xi }}\log p\left( \boldsymbol{\alpha |}g,\boldsymbol{\kappa } \right) \right] \nonumber \\ 
 & =-\mathbb{E} \left[ \Delta _{\boldsymbol{\xi }}^{\boldsymbol{\xi }}\log p\left( {{\mathbf{y}}_{s}}\mathbf{|\xi } \right) \right]-\mathbb{E} \left[ \Delta _{\boldsymbol{\xi }}^{\boldsymbol{\xi }}\log p\left( \boldsymbol{\alpha } \right) \right],
\label{eq_ll4}
\end{align}
where $p\left( {{\mathbf{y}}_{s}},\boldsymbol{\alpha |}g,\boldsymbol{\kappa } \right)$ is the joint \textcolor{black}{a posteriori} probability density function of the echo signals. Note that the second equality in (\ref{eq_ll4}) holds as a direct application of the Bayes theorem, and the third equality explores the fact that RCS $\boldsymbol{\alpha }$ is a random vector independent of $g$ and $\boldsymbol{\kappa }$.

Since the received signals obey Gaussian distribution, we can obtain the expected second derivative of the log-likelihood function as
\begin{equation}
-\mathbb{E} \left[ \Delta _{{{\boldsymbol{\theta} }_{1}}}^{{\boldsymbol{\theta }_{2}}}\log p\left( {{\mathbf{y}}_{s}}\mathbf{|\xi } \right) \right]=\frac{2}{\sigma _{s}^{2}} \mathcal{R}\int_{{{t}_{s}}}{\mathbb{E}\left[ \frac{\partial {{\mathbf{s}}^{H}}}{\partial {\boldsymbol{\theta }_{1}}}\frac{\partial \mathbf{s}}{\partial \boldsymbol{\theta}_{2}^{T}} \right]\mathrm{d}t},
\end{equation}
where ${{\boldsymbol{\theta} }_{1}}$ and ${{\boldsymbol{\theta} }_{2}}$ are arbitrary variables. Once again, considering the random characteristic of RCS, we can derive $\mathbb{E}\left[ \Delta _{g}^{\boldsymbol{\alpha }}\log p\left( \boldsymbol{\alpha } \right) \right]={{\mathbf{0}}_{2K\times 1}}$ and $\mathbb{E}\left[ \Delta _{\boldsymbol{\kappa }}^{\boldsymbol{\alpha }}\log p\left( \boldsymbol{\alpha } \right) \right]={\mathbf{0}}_{2K\times \left(2Q+3\right)}$. We also have $\mathbb{E}\left[ \Delta _{g}^{\boldsymbol{\alpha }}\log p\left( {{\mathbf{y}}_{s}}|\boldsymbol{\xi } \right) \right]={{\mathbf{0}}_{2K\times 1}}$ and $\mathbb{E}\left[ \Delta _{\boldsymbol{\kappa }}^{\boldsymbol{\alpha }}\log p\left( {{\mathbf{y}}_{s}}|\boldsymbol{\xi } \right) \right]={{\mathbf{0}}_{2K\times \left(2Q+3\right)}}$ since $\mathbb{E}\left[ {{\boldsymbol{\alpha }}_{k}} \right]=0$.

Combining the above properties, the FIM can be split as
\begin{equation}
\mathbf{J}\left( \boldsymbol{\xi } \right)=\left[ \begin{matrix}
   {{\mathbf{F}}_{\boldsymbol{\alpha }}} & {{\mathbf{0}}_{2K\times \left( 2Q+4 \right)}}  \\
   {{\mathbf{0}}_{\left( 2Q+4 \right)\times 2K}} & {{\mathbf{F}}_{\left( \boldsymbol{\kappa },g \right)}}  \\
\end{matrix} \right],
\end{equation}
\begin{equation}
{{\mathbf{F}}_{\left( \boldsymbol{\kappa },g \right)}}=\left[ \begin{matrix}
   {{f}_{g}} & \mathbf{f}_{\boldsymbol{\kappa },g}^{T}  \\
   {{\mathbf{f}}_{\boldsymbol{\kappa },g}} & {{\mathbf{f}}_{\boldsymbol{\kappa }}}  \\
\end{matrix} \right],
\label{eq_ll100}
\end{equation}
where ${{\mathbf{F}}_{\boldsymbol{\alpha }}}\in {\mathbb{R}^{2K\times 2K}}$, ${{\mathbf{f}}_{\boldsymbol{\kappa }}}\in {\mathbb{R}^{\left( 2Q+3 \right)\times \left( 2Q+3 \right)}}$ and ${{\mathbf{F}}_{\left( \boldsymbol{\kappa },g \right)}}\in {\mathbb{R}^{\left( 2Q+4 \right)\times \left( 2Q+4 \right)}}$ are FIMs of the RCS parameter $\boldsymbol{\alpha }$, the deterministic vector $\boldsymbol{\kappa }$ and the combination of $\left( \boldsymbol{\kappa },g \right)$, respectively, ${{i}_{{g}}}$ is the Fisher information scalar related to the path loss term $g$, ${{\mathbf{f}}_{\boldsymbol{\kappa },g}}\in {\mathbb{R}^{2Q+3}}$ refers to the Fisher vector regarding the path loss term $g$ and parameters of interest $\boldsymbol{\kappa }$.

\begin{figure*}[t]
\begin{align}
\label{CRB d_o}
& CRB\left( {{d}_{o}} \right)={{\left( \frac{2{{g}^{2}}{{N}_{r}}{{Z}_{2}}}{\sigma _{s}^{2}} \right)}^{-1}}{{\left[ \sum\limits_{k=1}^{K}{{{l}_{k}}\mathbf{a}_{k}^{H}{{\mathbf{R}}_{x}}{{\mathbf{a}}_{k}}}-{{{\left( \sum\nolimits_{k=1}^{K}{{{l}_{k}}{{X}_{k}}\mathbf{a}_{k}^{H}{{\mathbf{R}}_{x}}{{\mathbf{a}}_{k}}} \right)}^{2}}}/{\sum\nolimits_{k=1}^{K}{{{l}_{k}}X_{k}^{2}\mathbf{a}_{k}^{H}{{\mathbf{R}}_{x}}{{\mathbf{a}}_{k}}}} \right]}^{-1}}, \\
\label{CRB phi_o}
& CRB\left( {{\phi }_{o}} \right)={{\left( \frac{2{{g}^{2}}{{N}_{r}t_s}}{\sigma _{s}^{2}} \right)}^{-1}}{\textcolor{black}{\left\{ \sum\limits_{k=1}^{K}{{{l}_{k}}\left( {{Z}_{1,k}}\mathbf{a}_{k}^{H}{{\mathbf{R}}_{x}}{{\mathbf{a}}_{k}}+\mathbf{\dot{a}}_{k}^{H}{{\mathbf{R}}_{x}}{{{\mathbf{\dot{a}}}}_{k}}\right)-\left[\sum\limits_{k=1}^{K} l_k\left(\mathbf{\dot{a}}_{k}^{H}{{\mathbf{R}}_{x}}{{\mathbf{a}}_{k}}+\mathbf{\dot{a}}_{k}^{H}{{\mathbf{R}}_{x}}{{\mathbf{a}}_{k}} \right)\right]^{2}/4{\sum\limits_{k=1}^{K}l_k\mathbf{a}_{k}^{H}{{\mathbf{R}}_{x}}{{\mathbf{a}}_{k}}} } \right\}}^{-1}},\\
\label{CRB varphi}
& CRB\left( \varphi  \right)=CRB\left( {{\phi }_{o}} \right)+{{\left( \frac{2{{g}^{2}}{{N}_{r}}{{Z}_{2}}}{\sigma _{s}^{2}} \right)}^{-1}}{{\left[ \sum\limits_{k=1}^{K}{{{l}_{k}}X_{k}^{2}\mathbf{a}_{k}^{H}{{\mathbf{R}}_{x}}{{\mathbf{a}}_{k}}}-{{{\left( \sum\nolimits_{k=1}^{K}{{{l}_{k}}{{X}_{k}}\mathbf{a}_{k}^{H}{{\mathbf{R}}_{x}}{{\mathbf{a}}_{k}}} \right)}^{2}}}/{\sum\nolimits_{k=1}^{K}{{{l}_{k}}\mathbf{a}_{k}^{H}{{\mathbf{R}}_{x}}{{\mathbf{a}}_{k}}}} \right]}^{-1}}.
\end{align}\hrule
\end{figure*}

It should be noted that while there are a total of $2Q+2K+4$ unknown parameters in $\boldsymbol{\xi }$, we are only interested in the specific estimation of $2Q+3$ parameters in $\boldsymbol{\kappa }$. Thus, the effective Fisher information matrix (EFIM) should be extracted from the overall FIM for further analysis. We commence by ignoring the FIM elements related to $\boldsymbol{\alpha }$ given its irrelevance of $\boldsymbol{\kappa }$ and $g$, and only focus on the lower-right matrix block ${{\mathbf{F}}_{\left( \boldsymbol{\kappa },g \right)}}$. Then, we invert ${{\mathbf{f}}_{\boldsymbol{\kappa }}}$ as the Schur complement of ${{f}_{g}}$ over ${{\mathbf{f}}_{\boldsymbol{\kappa },g}}$ and calculate the EFIM as
\begin{equation}
\mathbf{J}\left( \boldsymbol{\kappa } \right)={{\mathbf{f}}_{\boldsymbol{\kappa }}}-{{\mathbf{f}}_{\boldsymbol{\kappa },g}}\mathbf{f}_{\boldsymbol{\kappa },g}^{T}/{f}_{g}.
\end{equation}

With prior information of the RCS parameter $\boldsymbol{\alpha }$, the simplified EFIM $\mathbf{J}\left( \boldsymbol{\kappa } \right)$ can be extracted from the original FIM $\mathbf{J}\left( \boldsymbol{\xi } \right)$. Nevertheless, we still need to calculate the matrix inverse of EFIM to obtain the final CRB matrix on $\boldsymbol{\kappa }$. It should be noted that EFIM $\mathbf{J}\left( \boldsymbol{\kappa } \right)$ has a dimension of $\left( 2Q+3 \right)\times \left( 2Q+3 \right)$, where the value of TFS parameter $Q$ is generally greater than 8 for an acceptable representation of the contour. Hence, considering the matrix inversion operation, it is hard to obtain closed-form CRB on a specific parameter directly from $\mathbf{J}\left( \boldsymbol{\kappa } \right)$, say the target direction, not to mention constructing a solvable problem aimed at minimizing the derived CRB. Consequently, we further define $\boldsymbol{\kappa =}\left[ {{\boldsymbol{\kappa }}_{1}^T},{{\boldsymbol{\kappa }}_{2}^T} \right]^T\text{, }{{\boldsymbol{\kappa }}_{1}}=\left[ {{d}_{o}},{{\phi }_{o}},\varphi  \right]^T\text{, }{{\boldsymbol{\kappa }}_{2}}=\left[ \mathbf{m}^T,\mathbf{n}^T \right]^T$, and make the following assumptions
\begin{equation}
{{\mathbf{f}}_{\boldsymbol{\kappa }}}=\left[ \begin{matrix}
   {{\mathbf{F}}_{{{\boldsymbol{\kappa }}_{1}}}} & \mathbf{F}_{{{\boldsymbol{\kappa }}_{2}},{{\boldsymbol{\kappa }}_{1}}}^{T}  \\
   {{\mathbf{F}}_{{{\boldsymbol{\kappa }}_{2}},{{\boldsymbol{\kappa }}_{1}}}} & {{\mathbf{F}}_{{{\boldsymbol{\kappa }}_{2}}}}  \\
\end{matrix} \right],
\end{equation}
\begin{equation}
{{\mathbf{f}}_{\boldsymbol{\kappa },g}}=\left[ 
   {\mathbf{f}}_{{{\boldsymbol{\kappa }}_{1}},g}^{T},\hspace{0.1cm} 
   {\mathbf{f}}_{{{\boldsymbol{\kappa }}_{2}},g}^{T} \right]^{T},
\end{equation}
\begin{equation}
\mathbf{J}\left( \boldsymbol{\kappa } \right)=\left[ \begin{matrix}
   \mathbf{J}\left( {{\boldsymbol{\kappa }}_{1}} \right) & \mathbf{J}^{T}{{\left( {{\boldsymbol{\kappa }}_{2}},{{\boldsymbol{\kappa }}_{1}} \right)}}  \\
   \mathbf{J}\left( {{\boldsymbol{\kappa }}_{2}},{{\boldsymbol{\kappa }}_{1}} \right) & \mathbf{J}\left( {{\boldsymbol{\kappa }}_{2}} \right)  \\
\end{matrix} \right],
\end{equation}

\begin{subequations}
\begin{align}
& \mathbf{CR}{{\mathbf{B}}_{\boldsymbol{\kappa }}}=\mathbf{J}^{-1}{{\left( \boldsymbol{\kappa } \right)}}={{\left( {{\mathbf{f}}_{\boldsymbol{\kappa }}}-{{\mathbf{f}}_{\boldsymbol{\kappa },g}}\mathbf{f}_{\boldsymbol{\kappa },g}^{T}/{f}_{g} \right)}^{-1}}, \\
\label{eq_31}
& \mathbf{CR}{{\mathbf{B}}_{{{\boldsymbol{\kappa }}_{1}}}}=\mathbf{J}^{-1}{{\left( {{\boldsymbol{\kappa }}_{1}} \right)}}={{\left( {{\mathbf{F}}_{{{\boldsymbol{\kappa }}_{1}}}}-{{\mathbf{f}}_{{{\boldsymbol{\kappa }}_{1}},g}}\mathbf{f}_{{{\boldsymbol{\kappa }}_{1}},g}^{T}/{f}_{g} \right)}^{-1}},
\end{align}
\end{subequations}
where the modified EFIM $\mathbf{J}\left( {{\boldsymbol{\kappa }}_{1}} \right)$ and $\mathbf{J}\left( {{\boldsymbol{\kappa }}_{2}} \right)$ compose the upper-left and lower-right matrix blocks of the original EFIM $\mathbf{J}\left( \boldsymbol{\kappa } \right)$. The elements of ${\mathbf{f}}_{\boldsymbol{\kappa }}$ and ${{\mathbf{f}}_{\boldsymbol{\kappa},g}}$, i.e. ${{\mathbf{F}}_{{{\boldsymbol{\kappa }}_{1}}}}$, ${{\mathbf{F}}_{{{\boldsymbol{\kappa }}_{1}},{{\boldsymbol{\kappa }}_{2}}}}$ and ${{\mathbf{f}}_{{{\boldsymbol{\kappa }}_{1}},g}}$, share similar definitions with the elements in (\ref{eq_ll100}). While $\mathbf{J}\left( {{\boldsymbol{\kappa }}_{1}} \right)$ is a shrunken version of $\mathbf{J}\left( \boldsymbol{\kappa } \right)$ with a size of $3 \times 3$, it provides a robust and closed-form CRB approximation for $\mathbf{J}\left( \boldsymbol{\kappa } \right)$ regarding the shared parameters ${{\boldsymbol{\kappa }}_{1}}$ in these two EFIMs, namely $CR{{B}_{{{\boldsymbol{\kappa }}_{1}}}}\left( {{d}_{o}} \right)\approx CR{{B}_{\boldsymbol{\kappa }}}\left( {{d}_{o}} \right)$, $CR{{B}_{{{\boldsymbol{\kappa }}_{1}}}}\left( {{\phi }_{o}} \right)\approx CR{{B}_{\boldsymbol{\kappa }}}\left( {{\phi }_{o}} \right)$ and $CR{{B}_{{{\boldsymbol{\kappa }}_{1}}}}\left( \varphi  \right)\approx CR{{B}_{\boldsymbol{\kappa }}}\left( \varphi  \right)$, which compose the main diagonal of matrix $CR{{B}_{{{\boldsymbol{\kappa }}_{1}}}}$. \textcolor{black}{Such an approximation becomes accurate when the ET is sufficiently distant from the BS.}

From another aspect, the contour parameter ${{\boldsymbol{\kappa }}_{2}}$ is actually invariant for one specific ET. We can exclude ${{\boldsymbol{\kappa }}_{2}}$ from $\boldsymbol{\kappa }$ to be estimated when the estimating accuracy is acceptable after several observations, or when ${{\boldsymbol{\kappa }}_{2}}$ is already known by the BS. As such, the whole EFIM $\mathbf{J}\left( \boldsymbol{\kappa } \right)$ naturally degrades to the modified EFIM $\mathbf{J}\left( {{\boldsymbol{\kappa }}_{1}} \right)$. In the following, we shall use $CRB$ to replace $CR{{B}_{{{\boldsymbol{\kappa }}_{1}}}}$ for simplicity. Finally, with prior knowledge of the RCS distribution and the exact contour of the target, we have the following theorem:

\begin{theorem}
The CRBs on ${{d}_{o}}$, ${{\phi }_{o}}$ and $\varphi $ can be expressed as $(\ref{CRB d_o})-(\ref{CRB varphi})$, respectively, where ${{Z}_{1,k}}={{\pi }^{2}}\left( N_{r}^{2}-1 \right){{\cos }^{2}}\phi_{k} /12$, ${{Z}_{2}}={{\left( 4\pi B/c \right)}^{2}}$, \textcolor{black}{${{X}_{k}}={{- \boldsymbol{\nu} _{k}^{T}\mathbf{m}\cos \left( {{\phi }_{o}}+\varphi  \right)+\boldsymbol{\varsigma} _{k}^{T}\mathbf{n}\sin \left( {{\phi }_{o}}+\varphi  \right) }}$}, ${{\mathbf{a}}_{k}}$ is the abbreviation for $\mathbf{a}\left( {{\phi }_{k}} \right)$, ${{\boldsymbol{\nu} }_{k}}={{\left[ \cos \left( {{u}_{k}} \right),...,\cos \left( Q{{u}_{k}} \right) \right]}^{T}}$ and ${{\boldsymbol{\varsigma} }_{k}}={{\left[ \sin \left( {{u}_{k}} \right),...,\sin \left( Q{{u}_{k}} \right) \right]}^{T}}$ are the cosine and sine harmonics, respectively.
\end{theorem}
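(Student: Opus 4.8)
The plan is to form the $3\times3$ effective Fisher information matrix $\mathbf{J}(\boldsymbol{\kappa}_1)$ of (\ref{eq_31}) in closed form and invert it by cofactors. The first step is to differentiate the echo model (\ref{eq_ll8}) with respect to the four scalars $g$, $d_o$, $\phi_o$, $\varphi$. The path loss $g$ enters linearly, so $\partial\mathbf{s}/\partial g=\mathbf{s}/g$; the other three enter only through the per-scatterer range $d_k$ (as the delay $2d_k/c_0$ inside $\mathbf{x}(\cdot)$) and the per-scatterer direction $\phi_k$ (through the steering vectors $\mathbf{a}_k$, $\mathbf{b}_k$). I would therefore use the chain rule $\partial/\partial\theta=(\partial d_k/\partial\theta)\,\partial/\partial d_k+(\partial\phi_k/\partial\theta)\,\partial/\partial\phi_k$, reading the geometric Jacobians off the contour map (\ref{p(u)}). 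Keeping the leading far-field terms, a radial shift affects only range ($\partial d_k/\partial d_o\approx1$, $\partial\phi_k/\partial d_o\approx0$); a direction shift rotates every scatterer ($\partial\phi_k/\partial\phi_o\approx1$); and an orientation change moves each scatterer radially by the projected contour displacement $X_k$ (with $\partial\phi_k/\partial\varphi\approx0$). Crucially, the same displacement governs how $\phi_o$ perturbs the range, with the opposite sign, $\partial d_k/\partial\phi_o\approx -X_k$; establishing this antisymmetry and the negligibility of the remaining $O(1/d_o)$ cross-terms is the first delicate step, and is where the quantity $X_k$ is defined.

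Next I would insert these derivatives into the Gaussian information identity $-\mathbb{E}[\Delta_{\theta_1}^{\theta_2}\log p(\mathbf{y}_s|\boldsymbol{\xi})]=(2/\sigma_s^2)\,\mathcal{R}\int_{t_s}\mathbb{E}[\partial\mathbf{s}^H/\partial\theta_1\,\partial\mathbf{s}/\partial\theta_2^T]\,\mathrm{d}t$. Because the RCS coefficients $\alpha_k$ are zero-mean, unit-variance and mutually independent, every cross-scatterer expectation vanishes and the double sum collapses to a single sum over $k$; this decorrelation is what makes a closed form possible. The temporal integrals then reduce to the transmit covariance via $\int_{t_s}\mathbf{x}\mathbf{x}^H\mathrm{d}t=t_s\mathbf{R}_x$ of (\ref{eq_3}); each delay derivative contributes the spectral second-moment factor $Z_2=(4\pi B/c)^2$, while the $\mathcal{R}\int\dot{x}^{*}x\,\mathrm{d}t$-type mixed terms integrate to zero and hence decouple both range parameters $d_o$ and $\varphi$ from $g$. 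On the receive side, $\mathbf{b}_k^H\mathbf{b}_k=N_r$ and the aperture sum $\|\dot{\mathbf{b}}_k\|^2=N_rZ_{1,k}$ deliver the $N_r$ and $Z_{1,k}$ factors, leaving the transmit-array quantities $\mathbf{a}_k^H\mathbf{R}_x\mathbf{a}_k$, $\dot{\mathbf{a}}_k^H\mathbf{R}_x\mathbf{a}_k$ and $\dot{\mathbf{a}}_k^H\mathbf{R}_x\dot{\mathbf{a}}_k$.

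Assembling these entries yields the blocks $\mathbf{F}_{\boldsymbol{\kappa}_1}$, $\mathbf{f}_{\boldsymbol{\kappa}_1,g}$ and $f_g$ of (\ref{eq_ll100}). Eliminating the nuisance $g$ by the Schur complement $\mathbf{J}(\boldsymbol{\kappa}_1)=\mathbf{F}_{\boldsymbol{\kappa}_1}-\mathbf{f}_{\boldsymbol{\kappa}_1,g}\mathbf{f}_{\boldsymbol{\kappa}_1,g}^T/f_g$ touches only the $\phi_o$ diagonal (its beampattern gain $\mathbf{a}_k^H\mathbf{R}_x\mathbf{a}_k$ is the single power-bearing term), which is exactly the subtracted ratio in (\ref{CRB phi_o}). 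With the Jacobians above, $\mathbf{J}(\boldsymbol{\kappa}_1)$ in the order $(d_o,\phi_o,\varphi)$ becomes the superposition of a range part formed by summing over $k$ the weighted outer products $(1,-X_k,X_k)^T(1,-X_k,X_k)$, with weight $Z_2\,l_k\,\mathbf{a}_k^H\mathbf{R}_x\mathbf{a}_k$, and a single large angle term on the $\phi_o$ diagonal. Writing $a$, $b$, $c$ for the range sums $\sum_k l_k\mathbf{a}_k^H\mathbf{R}_x\mathbf{a}_k$, $\sum_k l_kX_k\mathbf{a}_k^H\mathbf{R}_x\mathbf{a}_k$, $\sum_k l_kX_k^2\mathbf{a}_k^H\mathbf{R}_x\mathbf{a}_k$ (each carrying the prefactor $2g^2N_rZ_2/\sigma_s^2$) and $g_0$ for the $\phi_o$ angle diagonal, a direct cofactor computation gives $\det\mathbf{J}(\boldsymbol{\kappa}_1)=g_0(ac-b^2)$ with principal minors $M_{11}=g_0c$, $M_{22}=ac-b^2$ and $M_{33}=(ac-b^2)+ag_0$, whence $CRB(d_o)=c/(ac-b^2)$, $CRB(\phi_o)=1/g_0$ and $CRB(\varphi)=1/g_0+a/(ac-b^2)$, which are precisely (\ref{CRB d_o})--(\ref{CRB varphi}). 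In particular the identity $M_{33}=M_{22}+ag_0$ produces the additive form $CRB(\varphi)=CRB(\phi_o)+(\cdots)$, reflecting that the range data alone cannot separate $\phi_o$ from $\varphi$ (they enter $d_k$ with opposite signs), so orientation inherits the whole direction uncertainty on top of its own range-based term.

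The main obstacle I anticipate is the geometric bookkeeping of the first step: correctly propagating $d_o$, $\phi_o$, $\varphi$ through (\ref{p(u)}) to the delay and the steering vectors, justifying the far-field truncation, and in particular verifying the antisymmetry $\partial d_k/\partial\phi_o=-\partial d_k/\partial\varphi=-X_k$ on which the entire coupling structure, and thus the clean cofactor identities $\det\mathbf{J}(\boldsymbol{\kappa}_1)=g_0 M_{22}$ and $M_{33}=M_{22}+ag_0$, rests. The subsequent evaluation of the temporal and spatial moments and the $3\times3$ inversion are then essentially mechanical.
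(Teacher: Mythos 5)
Your proposal follows essentially the same route as the paper's Appendix I: differentiate the echo model through the per-scatterer intermediates $\left(d_k,\phi_k\right)$ via the chain rule, apply the far-field approximations to the geometric Jacobians, use the zero-mean independent RCS to collapse the double sum, evaluate the temporal/spectral moments (yielding $Z_2$, the vanishing mixed terms) and the array identities ($\Vert\mathbf{b}_k\Vert^2=N_r$, $\Vert\dot{\mathbf{b}}_k\Vert^2=N_rZ_{1,k}$), eliminate $g$ by the Schur complement which only corrects the $\phi_o$ diagonal, and invert the $3\times 3$ EFIM by cofactors to obtain $(\ref{CRB d_o})-(\ref{CRB varphi})$. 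The only discrepancy is your antisymmetric sign pattern $\partial d_k/\partial\phi_o\approx -X_k$, $\partial d_k/\partial\varphi\approx X_k$ versus the paper's Appendix I-B, which assigns the same sign to both entries of $\boldsymbol{\mu}_k$; this is immaterial, since the two resulting FIMs are related by a $\mathrm{diag}\left(1,-1,1\right)$ similarity and therefore have identical inverse diagonals, so your computation reaches the same closed-form CRBs.
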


\begin{proof}
Please see Appendix I.
\end{proof}

\begin{remark}
Observing the structures of the CRBs in $(\ref{CRB d_o})-(\ref{CRB varphi})$, we can find some interesting facts: 1) While the CRB on ${{\phi }_{o}}$ has merely one summation operation over the contour subsections, the CRBs on ${{d}_{o}}$ and $\varphi $ are typically more sophisticated with multiple summations; 2) The CRBs on angle parameters are closely related, where $CRB\left( \varphi  \right)$ can be expressed as the sum of $CRB\left( {{\phi }_{o}} \right)$ and a term similar to $CRB\left( {{d}_{o}} \right)$. This also indicates that target orientation $\varphi$ is harder to be estimated than the direction $\phi_{o}$; 3) $CRB\left( {{\phi }_{o}} \right)$ for ET can be viewed as an extended version of $CRB\left( {{\phi }_{o}} \right)$ for PT in \cite{Liu21TSP}, despite some coefficient difference.

Since the direction is generally recognized as a crucial parameter in ISAC systems for effective beamforming, in the subsequent sections we shall focus on the minimization of $CRB\left( {{\phi }_{o}} \right)$, setting aside the discussion of other CRBs.
\end{remark}

\subsection{CRB for Point Target}
In the radar literature, a target located far away from the radar is generally considered as a PT. This is a special case of ET with zero contour and zero extend. In this subsection, we follow the similar derivation procedures in the previous subsection to obtain the CRBs for PT and then establish their connections. Following the notations in $(\ref{eq_ll8})-(\ref{eq14})$, the echo signals and the received sensing signals are
\begin{equation}
{{\mathbf{s}}_{p}}\left( t \right)=g{{\alpha}_{p}}\mathbf{b}\left( {{\phi }_{o}} \right){{\mathbf{a}}^{H}}\left( {{\phi }_{o}} \right)\mathbf{x}\left( t-{2{{d}_{o}}}/{c_0} \right),
\end{equation}
\begin{equation}
\mathbf{y}_{s,p}=\mathbf{s}_{p}\left( t \right)+\mathbf{z}_s (t),
\end{equation}
where the subscript $p$ refers to the PT case, ${{\alpha}_{p}}$ describes the target RCS. We can observe that sensing a PT is typically less difficult compared with sensing an ET, since the number of unknown parameters shrinks from $2Q+3$ to $2$, and we only need to estimate ${{d}_{o}}$ and ${{\phi }_{o}}$ for a PT. Applying similar derivation process of the CRBs for ET, the CRBs for PT are
\begin{equation}
\label{eq31}
CR{{B}_{p}}\left( {{d}_{o}} \right)={{\left( {2g_{p}^{2}{{N}_{r}}{{Z}_{2}}}/{\sigma _{s}^{2}} \right)}^{-1}}{{\left( \mathbf{a}_{o}^{H}{{\mathbf{R}}_{x}}{{\mathbf{a}}_{o}} \right)}^{-1}},
\end{equation}
\begin{align}
\label{eq32}
&CRB_{p}\left( {{\phi }_{o}} \right)={\left({2{{g}_{p}^{2}}{{N}_{r}}t_s}/{\sigma _{s}^{2}} \right)}^{-1}\nonumber\\
&\left[ {{Z}_{1,o}}\mathbf{a}_{o}^{H}{{\mathbf{R}}_{x}}{{\mathbf{a}}_{o}}+\mathbf{\dot{a}}_{o}^{H}{{\mathbf{R}}_{x}}{{{\mathbf{\dot{a}}}}_{o}}-\frac{{{\left( \mathbf{\dot{a}}_{o}^{H}{{\mathbf{R}}_{x}}{{\mathbf{a}}_{o}}+\mathbf{a}_{o}^{H}{{\mathbf{R}}_{x}}{{{\mathbf{\dot{a}}}}_{o}} \right)}^{2}}}{4\mathbf{a}_{o}^{H}{{\mathbf{R}}_{x}}{{\mathbf{a}}_{o}}} \right]^{-1},
\end{align}
where ${{\mathbf{a}}_{o}}$ is the abbreviation of $\mathbf{a}\left( {{\phi }_{o}} \right)$, the auxiliary variables ${{Z}_{1,o}}$ and ${{Z}_{2}}$ share the same definition with $(\ref{CRB d_o})-(\ref{CRB phi_o})$. Comparing $(\ref{CRB d_o})-(\ref{CRB phi_o})$ with $(\ref{eq31})-(\ref{eq32})$, if we eliminate the terms related to the contour intermediate variable $X_k$ and normalize the length variable ${{l}_{k}}$, it can be observed that CRBs of PT are equivalent to the modified CRBs of ET with a single reflection point, revealing the connection between PT and ET. 

\section{Joint Beamforming Design}

In this section, we focus on the joint beamforming design for the ISAC system. We first present the general ISAC beamforming optimization problem in Section IV-A, which aims at minimizing the CRB for direction estimation under communication and beam coverage constraints. To solve this problem, a standard SDR beamforming algorithm and an efficient ZF beamforming algorithm are presented in Section IV-B and Section IV-C, respectively.

\subsection{General Beamforming Optimization Problem}
Our goal is to optimize the beamformers for minimizing CRB of the direction estimation for ET in $(\ref{CRB phi_o})$, under some practical constraints. The problem can be formulated as
\begin{align}
(\mathcal{P}1):\hspace{1pt}&\underset{\left\{ {{\mathbf{w}}_{n}} \right\}_{n=1}^{N_c}}{\mathop{\min }}\,CRB\left( {{\phi }_{o}} \right) \label{Problem P1}\\
\text{s.t. }&\mathrm{tr}\left( {{\mathbf{R}}_{x}} \right)\le {{P}_{t}},\mathbf{R}_{x}=\sum\nolimits_{n=1}^{N_c}\mathbf{R}_{n}\tag{\ref{Problem P1}{a}}\label{Problem P1a}\\
&\left( 1+{{\Gamma }^{-1}} \right)\mathbf{h}_{n}^{H}{{\mathbf{R}}_{n}}{{\mathbf{h}}_{n}}\geq \mathbf{h}_{n}^{H}{{\mathbf{R}}_{x}}{{\mathbf{h}}_{n}}+\sigma _{n}^{2},\forall n,\tag{\ref{Problem P1}{b}}\label{Problem P1b}\\
&2\min\limits_{1 \leq k \leq K} \left( \mathbf{a}_{k}^{H}{{\mathbf{R}}_{x}}{{\mathbf{a}}_{k}} \right)-\max\limits_{1 \leq k \leq K} \left( \mathbf{a}_{k}^{H}{{\mathbf{R}}_{x}}{{\mathbf{a}}_{k}} \right)\geq 0,\tag{\ref{Problem P1}{c}}\label{Problem P1c}
\end{align}
here, constraint $(\mathrm{\ref{Problem P1a}})$ is to ensure a total transmit power constraint at the BS, and constraint $(\mathrm{\ref{Problem P1b}})$ is to guarantee the minimum SINR requirement for each CU, where $\Gamma$ is the SINR threshold for all CUs and $\mathbf{R}_{n}=\mathbf{w}_{n}\mathbf{w}_{n}^{H}$ is the covariance matrix for the $n$-th CU. The last constraint $(\mathrm{\ref{Problem P1c}})$ is a beam coverage constraint for ET sensing. The reason for adding $(\mathrm{\ref{Problem P1c}})$ is as follows. 

To facilitate the sensing of ET instead of PT, we need to ensure that every element on the ET LoS contour receives sufficient energy and reflects off valid echo signals, which is vital for further ET parameter estimation. We use the beampattern term $\mathbf{a}_{k}^{H}{{\mathbf{R}}_{x}}{{\mathbf{a}}_{k}}$ to describe the energy received by the elements along the $k$-th LoS contour subsection. Thus, $(\mathrm{\ref{Problem P1c}})$ serves as a special 3-dB beam coverage constraint, where the maximum received energy of each LoS contour element should cater to the corresponding minimum value. Note that in $(\ref{eq_ll8})$, we discretize the LoS contour curve into $K$ subsections, and $(\mathrm{\ref{Problem P1c}})$ can be expressed in a max-min discrete form in $\mathcal{P}1$.

Recalling the expression in $(\ref{CRB phi_o})$, it is clear that the objective function $CRB\left( {{\phi }_{o}} \right)$ is non-convex owing to its fractional structure. Here we introduce an auxiliary variable $t$ to reshape $CRB\left( {{\phi }_{o}} \right)$ in $\mathcal{P}1$. We assume that
\begin{align}
\label{eq_add1}
&\textcolor{black}{\sum\limits_{k=1}^{K}l_k}\left({{Z}_{1,k}}\mathbf{a}_{k}^{H}{{\mathbf{R}}_{x}}{{\mathbf{a}}_{k}}+\mathbf{\dot{a}}_{k}^{H}{{\mathbf{R}}_{x}}{{{\mathbf{\dot{a}}}}_{k}}\right)-\nonumber\\
&\hspace{1.3cm}{\left[\textcolor{black}{\sum\limits_{k=1}^{K}l_k}{\mathcal{R}\left( \mathbf{\dot{a}}_{k}^{H}{{\mathbf{R}}_{x}}{{\mathbf{a}}_{k}}\right)}\right]^{2}}/{\textcolor{black}{\sum\limits_{k=1}^{K}l_k}\mathbf{a}_{k}^{H}{{\mathbf{R}}_{x}}{{\mathbf{a}}_{k}}} \geq t.
\end{align}

Utilizing the Schur complement condition \cite{Liu21TSP}, we additionally introduce an auxiliary second-order matrix $\mathbf{P}$, where $(\ref{eq_add1})$ can be transformed into a semidefinte form as
\begin{subequations}
\begin{align}
\label{P_k}
&\textcolor{black}{\mathbf{P}= \sum\limits_{k=1}^{K}l_k\mathbf{P}_k \succeq 0,}\\
&\mathbf{P}_k = \left[ \begin{matrix}
   \left({{Z}_{1,k}}\mathbf{a}_{k}^{H}{{\mathbf{R}}_{x}}{{\mathbf{a}}_{k}}+\mathbf{\dot{a}}_{k}^{H}{{\mathbf{R}}_{x}}{{{\mathbf{\dot{a}}}}_{k}}\right)-\frac{t}{Kl_k} & {\mathcal{R}\left( \mathbf{\dot{a}}_{k}^{H}{{\mathbf{R}}_{x}}{{\mathbf{a}}_{k}}\right)}  \\
  {\mathcal{R}\left( \mathbf{\dot{a}}_{k}^{H}{{\mathbf{R}}_{x}}{{\mathbf{a}}_{k}}\right)} & \mathbf{a}_{k}^{H}{{\mathbf{R}}_{x}}{{\mathbf{a}}_{k}}
\end{matrix} \right].
\end{align}
\end{subequations}

Finally, we get the equivalent form of problem $\mathrm{\mathcal{P}1}$ as
\begin{align}
(\mathcal{P}2):\hspace{1pt}& \underset{\left\{ {{\mathbf{w}}_{n}} \right\}_{n=1}^{N_c},{t}}{\mathop{\min }}\,-\textcolor{black}{t}\label{Problem P2}\\ 
&\text{s.t. } \mathbf{P}\succeq 0, \tag{\ref{Problem P2}{a}}\label{Problem P2a}\\ 
&\hspace*{0.53cm}(\mathrm{\ref{Problem P1a}}),(\mathrm{\ref{Problem P1b}}),(\mathrm{\ref{Problem P1c}})\nonumber,
\end{align}
where $\mathbf{P}$ in problem $\mathcal{P}2$ is consistent with that in $(\mathrm{\ref{P_k}})$.

\subsection{Joint Transmit Beamforming Design via SDR}
We observe that problem $\mathcal{P}2$ is still non-convex owing to the quadratic terms in the constraints, for instance ${{\mathbf{R}}_{n}}={{\mathbf{w}}_{n}}\mathbf{w}_{n}^{H}$. To formulate a convex problem, one common practice is to employ the SDR technique, replacing the original variable ${{\mathbf{w}}_{n}}$ by ${{\mathbf{R}}_{n}}={{\mathbf{w}}_{n}}\mathbf{w}_{n}^{H}$ with rank-one constraint $\text{rank}\left( {{\mathbf{R}}_{n}} \right)=1$. Omitting this rank-one constraint leads to the semidefinite programming (SDP) problem as follows
\begin{align}
(\mathcal{P}3):\hspace{1pt}&\underset{\left\{ {{\mathbf{R}}_{n}} \right\}_{n=1}^{N_c},t }{\mathop{\min }}\,-t\label{Problem P3}\\ 
 &\text{s.t. } \mathbf{R}_n \succeq 0, \forall n, \tag{\ref{Problem P3}{a}}\label{Problem P3a}\\
 &\hspace*{0.53cm}(\mathrm{\ref{Problem P1a}}),(\mathrm{\ref{Problem P1b}}),(\mathrm{\ref{Problem P1c}}),(\mathrm{\ref{Problem P2a}})\nonumber.
\end{align}

It can be noted that the relaxed problem $\mathcal{P}3$ is a standard SDP problem whose global optimum can be efficiently obtained by convex optimization toolboxes \cite{ref29,ref30}. While the solution for $\mathcal{P}3$ usually holds optimal for the initial non-convex problem $\mathcal{P}2$, we can not guarantee $\mathcal{P}3$ to be a tight relaxation of $\mathcal{P}2$ for all circumstances owing to the complicated constraints. In other words, the optimal solutions $\left\{ {{\mathbf{R}}_{n}} \right\}_{n=1}^{N_c}$ for $\mathcal{P}3$ are not necessarily all rank-one. Consequently, we present a rank-one solution extracting algorithm to obtain the valid beamformers, as outlined in Algorithm \ref{alg1}.

\begin{algorithm}
\caption{Extract rank-one beamformers from $\mathcal{P}$2.}
\begin{algorithmic}
\label{alg1}
\STATE 
\STATE {\textsc{Input:}}
\STATE \hspace{0.5cm}$ \text{$\mathcal{P}$2 solutions } \{\mathbf{R}_n\}_{n=1}^{N_c}  $
\STATE {\textsc{Output:}}
\STATE \hspace{0.5cm}$ \text{Rank-one transmit beamformers } \{\mathbf{\tilde{w}}_{n}\}_{n=1}^{N_c}  $
\STATE {\textsc{Steps:}}
\begin{enumerate}
    \item {Randomly extract $\mathbf{\tilde{u}}_{n}\in \mathrm{span} \left( \mathbf{R}_n \right)$}
    \item {Execute \begin{flalign}
        &\ \mathbf{\tilde{\eta}}=\left[\gamma_{1}\sigma _{n}^{2},...\gamma_{N_c}\sigma _{n}^{2}\right]^{T}, \nonumber \\
        \nonumber&\left[\mathbf{\tilde{F}}\right]_{i,n}=
        \begin{cases}
        \begin{aligned}
        &\mathbf{\tilde{u}}_{i}^{H}\mathbf{h}_{i}^{H}\mathbf{h}_{i}\mathbf{\tilde{u}}_{i}, && i=n, \\
        &\Gamma \mathbf{\tilde{u}}_{i}^{H}\mathbf{h}_{n}^{H}\mathbf{h}_{n}\mathbf{\tilde{u}}_{i}, && i\ne n,
        \end{aligned}
        \end{cases} &\\ 
        &\ \mathbf{\tilde{q}}=\mathbf{\tilde{F}}^{-1}\mathbf{\tilde{\eta}}=\left[\mathbf{\tilde{q}}_{1},...,\mathbf{\tilde{q}}_{N_c}\right]^{T}, & \nonumber \\
        &\ \mathbf{\tilde{w}}_{n}=\sqrt{\mathbf{\tilde{q}}}_{n}\mathbf{\tilde{u}}_{n}, n=1,...,N_c. & \nonumber
        \end{flalign}}
\end{enumerate}
\end{algorithmic}
\end{algorithm}

\begin{theorem}
If $\left\{ {{\mathbf{R}}_{n}} \right\}_{n=1}^{N_c}$ is the optimal solution to $\mathcal{P}3$, then Algorithm \ref{alg1} gives a solution $\left\{\tilde{\mathbf{w}}_c\right\}_{n=1}^{N_c}$ to $\mathcal{P}2$ without violating constraints $(\mathrm{\ref{Problem P1a}})$ and $(\mathrm{\ref{Problem P1b}})$.
\end{theorem}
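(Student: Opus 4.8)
The plan is to check that the vectors $\{\tilde{\mathbf{w}}_n\}$ produced by Algorithm~\ref{alg1} are valid rank-one beamformers and that they respect $(\ref{Problem P1a})$ and $(\ref{Problem P1b})$, treating the two constraints separately. Rank-one structure is automatic: since $\tilde{\mathbf{w}}_n=\sqrt{\tilde{q}_n}\,\tilde{\mathbf{u}}_n$, the induced covariance $\tilde{\mathbf{R}}_n=\tilde{\mathbf{w}}_n\tilde{\mathbf{w}}_n^H=\tilde{q}_n\tilde{\mathbf{u}}_n\tilde{\mathbf{u}}_n^H$ has rank one by construction, so the output is a genuine collection of beamforming vectors admissible in $\mathcal{P}2$.

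For the SINR constraint, I would first rewrite $(\ref{Problem P1b})$ in its equivalent per-user form $|\mathbf{h}_n^H\mathbf{w}_n|^2\ge \Gamma\big(\sum_{i\ne n}|\mathbf{h}_n^H\mathbf{w}_i|^2+\sigma_n^2\big)$ and substitute $\mathbf{w}_i=\sqrt{\tilde{q}_i}\,\tilde{\mathbf{u}}_i$. Collecting the coefficients of $\tilde{q}_i$ shows that enforcing equality in all $N_c$ constraints is precisely the linear system $\tilde{\mathbf{F}}\tilde{\mathbf{q}}=\tilde{\boldsymbol{\eta}}$ solved in Algorithm~\ref{alg1}, whose diagonal entries are the useful gains $|\mathbf{h}_n^H\tilde{\mathbf{u}}_n|^2$ and whose off-diagonal entries carry the interference terms weighted by $\Gamma$. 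Consequently, as soon as the recovered power vector $\tilde{\mathbf{q}}$ is entrywise non-negative, the square roots $\sqrt{\tilde{q}_n}$ are real and each SINR constraint is met with equality, hence not violated.

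The main obstacle is to certify $\tilde{\mathbf{q}}\ge\mathbf{0}$ together with the invertibility of $\tilde{\mathbf{F}}$. I would exploit the Z-matrix structure by writing $\tilde{\mathbf{F}}=\mathbf{D}-\Gamma\mathbf{B}$, with $\mathbf{D}\succ 0$ the diagonal of useful gains and $\mathbf{B}\ge\mathbf{0}$ the zero-diagonal interference-coupling matrix. Since $\{\mathbf{R}_n\}$ is feasible for $\mathcal{P}3$, the common SINR target $\Gamma$ is attainable, which forces the spectral radius $\rho(\Gamma\mathbf{D}^{-1}\mathbf{B})<1$; the Neumann series then yields $\tilde{\mathbf{F}}^{-1}=\sum_{k\ge 0}(\Gamma\mathbf{D}^{-1}\mathbf{B})^k\mathbf{D}^{-1}\ge\mathbf{0}$, so $\tilde{\mathbf{F}}$ is a nonsingular M-matrix and $\tilde{\mathbf{q}}=\tilde{\mathbf{F}}^{-1}\tilde{\boldsymbol{\eta}}\ge\mathbf{0}$ because $\tilde{\boldsymbol{\eta}}\ge\mathbf{0}$. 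This Perron--Frobenius step is where the feasibility of the relaxed problem must be converted into positivity of the reconstructed powers, and I expect it to be the delicate part of the argument.

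It remains to verify the power budget $(\ref{Problem P1a})$. I would argue that, by meeting every SINR constraint with equality, the pair $(\{\tilde{\mathbf{u}}_n\},\tilde{\mathbf{q}})$ realizes the least total power $\sum_n\tilde{q}_n\|\tilde{\mathbf{u}}_n\|^2$ among all beamformers confined to the extracted directions. Comparing this minimizer with the original feasible $\{\mathbf{R}_n\}$---whose signal components live in the same subspaces $\mathrm{span}(\mathbf{R}_n)$ and which already satisfies $\sum_n\mathrm{tr}(\mathbf{R}_n)\le P_t$---bounds the reconstructed power by the original transmit power, so $\mathrm{tr}(\tilde{\mathbf{R}}_x)\le P_t$. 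This mirrors the classical fact that the SINR-constrained downlink beamforming relaxation admits a rank-one optimum with no increase in total power, which I would invoke to close the proof.
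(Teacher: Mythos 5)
Your overall skeleton (reduce to the power-control linear system, get positivity of $\tilde{\mathbf{q}}$ via an M-matrix/Perron--Frobenius argument, then compare powers) is a legitimate self-contained route, and it differs from the paper, whose entire proof is a citation of the classical unicast-downlink SDR tightness result (Chapter 18 of \cite{ref31}). However, your argument has a genuine gap, and it sits exactly where you flagged it as delicate. The matrices $\mathbf{D}$ and $\mathbf{B}$ are built from the \emph{randomly extracted} directions $\tilde{\mathbf{u}}_n\in\mathrm{span}(\mathbf{R}_n)$, whereas the feasibility you invoke is feasibility of the \emph{matrices} $\{\mathbf{R}_n\}$. Attainability of the SINR targets by $\{\mathbf{R}_n\}$ does not transfer to attainability along arbitrary directions inside their column spaces, so $\rho(\Gamma\mathbf{D}^{-1}\mathbf{B})<1$ does not follow. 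This is not a technicality in the ISAC setting: precisely because the objective of $\mathcal{P}3$ is the sensing CRB, the optimal $\mathbf{R}_n$ is generally \emph{not} rank one, and its span contains sensing-oriented components that can be nearly orthogonal to $\mathbf{h}_n$. A random draw can then return $\tilde{\mathbf{u}}_n$ with $|\mathbf{h}_n^H\tilde{\mathbf{u}}_n|^2$ arbitrarily small, so $D_{nn}$ is arbitrarily small and the target is unattainable (or needs unbounded power) along those directions. The same defect breaks your final step: the minimality of $\sum_n\tilde{q}_n\|\tilde{\mathbf{u}}_n\|^2$ is a minimum over allocations \emph{confined to the directions} $\{\tilde{\mathbf{u}}_n\}$, and $\{\mathbf{R}_n\}$ is not in that class, so ``living in the same subspaces'' yields no inequality against $\mathrm{tr}(\mathbf{R}_x)\le P_t$. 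Already for $N_c=1$ one has $\tilde{q}_1\|\tilde{\mathbf{u}}_1\|^2=\Gamma\sigma_1^2\|\tilde{\mathbf{u}}_1\|^2/|\mathbf{h}_1^H\tilde{\mathbf{u}}_1|^2$, which blows up past $P_t$ as $\tilde{\mathbf{u}}_1$ approaches the orthogonal complement of $\mathbf{h}_1$, even though the SINR constraint is then met with equality.

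The missing idea is that the extraction must be channel-aware rather than arbitrary. Take $\tilde{\mathbf{u}}_n=\mathbf{R}_n\mathbf{h}_n/\sqrt{\mathbf{h}_n^H\mathbf{R}_n\mathbf{h}_n}$, which lies in $\mathrm{span}(\mathbf{R}_n)$. Then $|\mathbf{h}_n^H\tilde{\mathbf{u}}_n|^2=\mathbf{h}_n^H\mathbf{R}_n\mathbf{h}_n$, and the Schur-complement (Cauchy--Schwarz) inequality $\mathbf{R}_n\succeq\mathbf{R}_n\mathbf{h}_n\mathbf{h}_n^H\mathbf{R}_n/\left(\mathbf{h}_n^H\mathbf{R}_n\mathbf{h}_n\right)=\tilde{\mathbf{u}}_n\tilde{\mathbf{u}}_n^H$ shows that replacing each $\mathbf{R}_n$ by $\tilde{\mathbf{u}}_n\tilde{\mathbf{u}}_n^H$ preserves every useful-signal term, does not increase any interference term, and does not increase any trace; hence $(\mathrm{\ref{Problem P1a}})$ and $(\mathrm{\ref{Problem P1b}})$ hold outright for the unit-power beamformers $\tilde{\mathbf{u}}_n$. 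With these directions your machinery becomes sound: feasibility along $\{\tilde{\mathbf{u}}_n\}$ is now certified by an explicit power vector (all powers equal to one), so $\tilde{\mathbf{F}}$ is a nonsingular M-matrix, the balanced solution $\tilde{\mathbf{q}}=\tilde{\mathbf{F}}^{-1}\tilde{\boldsymbol{\eta}}$ is nonnegative and componentwise no larger than that certificate, and the total power of Algorithm~\ref{alg1}'s output can only decrease relative to $\sum_n\mathrm{tr}(\mathbf{R}_n)\le P_t$. This channel-aware construction is essentially what the cited reference does; without it, neither your spectral-radius claim nor your power bound can be rescued.
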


\begin{proof}
See Chapter 18 in \cite{ref31}.
\end{proof}

Note that one single solution $\{{\mathbf{\tilde{w}}}_{n}\}_{n=1}^{N_c}$ from Algorithm \ref{alg1} is not necessarily guaranteed to satisfy the beam coverage constraint $(\mathrm{\ref{Problem P1c}})$. Thus, we can execute Algorithm \ref{alg1} multiple times till constraint $(\mathrm{\ref{Problem P1c}})$ is met.

\subsection{Joint Transmit Beamforming Design via ZF}
Solving problem $\mathcal{P}3$ in previous section suffers from high complexity and inevitable randomness due to Algorithm \ref{alg1}, which motivates us to explore a more efficient and deterministic beamforming algorithm. In this subsection, we utilize ZF to design transmit beamformers. ZF beamforming is a well-known suboptimal beamforming approach \cite{ref34} which provides a tradeoff between complexity and system performance. The basic ZF conditions can be formulated as ${{\left[ \mathbf{HW} \right]}_{i,n}}=0,\text{ }i\ne n$ and ${{\left[ \mathbf{HW} \right]}_{i,n}}={{p}_{n}},i=n$. The matrix notation is expressed as
\begin{equation}
\label{eq38}
\mathbf{HW}=\text{diag}\left\{ \sqrt{\mathbf{p}} \right\},
\end{equation}
where $\mathbf{p=}\left[ {{p}_{1}},...,{{p}_{N_c}} \right]^T$ is the power allocation vector. It can be observed that ZF naturally decouples the multiuser channel into $N_c$ independent subchannels, transforming the beamforming design to a power allocation problem. We assume that $\mathbf{H}$ is a full row-rank matrix. The generalized inverse matrix of $\mathbf{H}$ is
\begin{equation}
\label{eq39}
{{\mathbf{H}}^{-}}={{\mathbf{H}}^{\dagger }}+{{\mathbf{P}}_{\bot }}\mathbf{U},
\end{equation}
where ${{\mathbf{H}}^{\dagger }}={{\mathbf{H}}^{H}}{{\left( \mathbf{H}{{\mathbf{H}}^{H}} \right)}^{-1}}=\left[ \mathbf{h}_{1}^{^{\dagger }},...,\mathbf{h}_{N_c}^{^{\dagger }} \right]$ is the pseudo-inverse of $\mathbf{H}$, ${{\mathbf{P}}_{\bot }}={{\mathbf{I}}_{{{N}_{t}}}}-{{\mathbf{H}}^{\dagger }}\mathbf{H}$ orthogonally projects any vector onto the null space of $\mathbf{H}$, $\mathbf{U}=\left[\mathbf{u}_{1},...,\mathbf{u}_{N_c}\right]$ is an arbitrary matrix yet to be designed. Combining $(\ref{eq38})$ and $(\ref{eq39})$, a universal form for ZF beamforming is expressed as
\begin{align}
&\mathbf{W}={{\mathbf{H}}^{-}}\mathrm{diag}\left\{ \sqrt{\mathbf{p}} \right\}=\left( {{\mathbf{H}}^{\dagger }}+{{\mathbf{P}}_{\bot }}\mathbf{U} \right)\mathrm{diag}\left\{ \sqrt{\mathbf{p}} \right\}, \\
\label{eq149}
&\hspace{2cm}{{\mathbf{w}}_{n}}=\sqrt{{{p}_{n}}}\left( \mathbf{h}_{n}^{\dagger }+{{\mathbf{P}}_{\bot }}{{\mathbf{u}}_{n}} \right).
\end{align}

From $(\ref{eq149})$, the original beamforming design of $\mathbf{W}$ is transformed to optimizing the power allocation vector $\mathbf{p}$ and the arbitrary matrix $\mathbf{U}$. To guarantee robust target estimation performance in radar systems, a common practice is to steer the probing signals straightly towards the direction of target. Then, we decompose each vector $\mathbf{u}_c$ in $\mathbf{U}$ into a linear combination of target contour subsection steering vectors, namely

\begin{equation}
{{\mathbf{u}}_{n}}=\sum\nolimits_{k=1}^{K}{{{\eta }_{n,k}}\mathbf{a}_k},
\end{equation}
here the subscripts $n$, $k$ represent that the steering vector of the $k$-th contour subsection direction is assigned to the $n$-th CU’s beamformer, ${{\eta }_{n,k}}$ is the steering vector coefficient. The optimization problem is thus presented as
\begin{align}
(\mathcal{P}4):\hspace{1pt} & \underset{\left\{ {{\eta }_{n,k}} \right\}_{n=1,k=1}^{N_c,K},\left\{ {{p}_{n}} \right\}_{n=1}^{N_c},t}{\mathop{\min }}\,-t \label{Problem P4}\\ 
  &\text{s.t. } p_c\ge \Gamma \sigma _{n}^{2},\forall n, \tag{\ref{Problem P4}{a}}\label{Problem P4a}\nonumber\\ 
 &\hspace*{0.53cm} {{\mathbf{w}}_{n}}=\sqrt{{{p}_{n}}}\left[ \mathbf{h}_{n}^{\dagger }+{{\mathbf{P}}_{\bot }}\sum\nolimits_{k=1}^{K}{{{\eta}_{n,k}}\mathbf{a}_k} \right],\forall n\tag{\ref{Problem P4}{b}}\label{Problem P4b}\nonumber\\
 &\hspace*{0.53cm}(\mathrm{\ref{Problem P1a}}),(\mathrm{\ref{Problem P1c}}),(\mathrm{\ref{Problem P2a}}).\nonumber
\end{align}

The optimization problem $\mathcal{P}4$ is non-convex due to the multiplicative coupling between variables, i.e. ${{\mathbf{u}}_{n}}\mathbf{u}_{n}^{H}=\sum\nolimits_{{{k}_{1}}=1}^{K}{\sum\nolimits_{{{k}_{2}}=1}^{K}{{{\eta }_{n,{{k}_{1}}}}}{{\eta }_{n,{{k}_{2}}}}{{\mathbf{a}}_{{{k}_{1}}}}\mathbf{a}_{{{k}_{2}}}^{H}}$ has a non-convex term of ${{\eta }_{n,{{k}_{1}}}}{{\eta }_{n,{{k}_{2}}}}$. To avoid the non-convex coupling, the column vectors in $\mathbf{U}$ is further modified to align with merely one contour subsection direction, namely
\begin{equation}
{{\mathbf{u}}_{n}}={{\eta }_{n}}{{\mathbf{a}}_{{{k}_{n}}}},
\end{equation}
here ${{k}_{n}}$ represents the subsection index aligned to the $n$-th CU which is selected from the contour subsection set $\mathcal{K}=\left\{ 1,...,K \right\}$, ${{\eta }_{n}}$ is the steering vector coefficient. The beamformer can be obtained as
\begin{equation}
\label{eq45}
{{\mathbf{w}}_{n}}=\sqrt{{{p}_{n}}}\left( \mathbf{h}_{n}^{\dagger }+{{\mathbf{P}}_{\bot }}{{\mathbf{u}}_{n}} \right)=\sqrt{{{p}_{n}}}\left( \mathbf{h}_{n}^{\dagger }+{{\mathbf{P}}_{\bot }}{{\eta }_{n}}{{\mathbf{a}}_{{{k}_{n}}}} \right).
\end{equation}

From $(\ref{eq45})$, we can observe that there still exists non-convex multiplicative coupling between variables ${{p}_{n}}$ and ${{\eta }_{n}}$. Consequently, we conditionally omit ${{p}_{n}}$ and reconstruct an equivalent beamforming vector ${{\mathbf{w}}_{n}}$ as follows
\begin{equation}
\label{w_n}
{{\mathbf{w}}_{n}}=\sqrt{{{p}_{n}}}\mathbf{h}_{n}^{\dagger }+{{\mathbf{P}}_{\bot }}{{\mathbf{u}}_{n}}=\sqrt{{{p}_{n}}}\mathbf{h}_{n}^{\dagger }+{{\mathbf{P}}_{\bot }}{{\eta }_{n}}{{\mathbf{a}}_{{{k}_{n}}}}.
\end{equation}

The equivalence between the non-omitted and modified beamformer in $(\ref{eq45})$ and $(\ref{w_n})$ is straightforward. Given an optimal solution ${{p}_{n}}$ and $\hat{\eta}_n$ regarding the non-omitted beamforming vector $(\ref{eq45})$, we can simply obtain the same beamformer by defining \textcolor{black}{${\eta}_n=\hat{\eta}_n/{\sqrt{{p}_{n}}}$} in $(\ref{w_n})$, and vice versa. The final ZF beamforming problem can be formulated as follows
\begin{align}
(\mathcal{P}5):\hspace{1pt} & \underset{\left\{ {{\eta }_{n}},{{p}_{n}} \right\}_{n=1}^{N_c},t}{\mathop{\min }}\,-t\label{Problem P5}\\ 
  &\text{s.t. } {{\mathbf{w}}_{n}}=\sqrt{{{p}_{n}}}\mathbf{h}_{n}^{\dagger }+{{\mathbf{P}}_{\bot }}{{\eta }_{n}}{{\mathbf{a}}_{{{k}_{n}}}},\forall n,\tag{\ref{Problem P5}{a}}\label{Problem P5a}\\
  &\hspace*{0.53cm}(\mathrm{\ref{Problem P1a}}),(\mathrm{\ref{Problem P1c}}),(\mathrm{\ref{Problem P2a}}),(\mathrm{\ref{Problem P4a}}).\nonumber
\end{align}

We note that the exact direction ${{\phi }_{{{k}_{n}}}}\in \left\{ {{\phi }_{1}},...,{{\phi }_{K}} \right\}$ for all beamformers should be determined prior to solving the ZF problem, namely we need to select $N_c$ out of totally $K$ directions in advance. While it would be very efficient to utilize one feasible and robust direction combination $\left\{ {{\phi }_{{{k}_{1}}}},...,{{\phi }_{{{k}_{N_c}}}} \right\}$ by experience to solve $\mathcal{P}5$, more generally we need to try through at total $\mathrm{C}_{K}^{N_c}$ direction sets in $\mathcal{P}5$ and find the optimal one for beamforming design. For each set, problem $\mathcal{P}5$ can be efficiently solved by existing toolboxes \cite{ref29,ref30}. The optimal solution is defined as the beamforming matrix with minimum CRB as follows
\begin{equation}
\left\{\mathbf{w}_{n}^{opt}\right\}_{n=1}^{N_c}=\min\limits_{1\leq e \leq \text{C}_{K}^{N_c} } {CRB}\left( \left\{ {\mathbf{w}_{n,e}} \right\}_{n=1}^{N_c} \right),
\label{eq48}
\end{equation}
where the subscript $e$ refers to the direction set index.

\subsection{Complexity Analysis}
The primary computation burdens in SDR and ZF beamforming originate from solving the SDP problems $\mathcal{P}3$ and $\mathcal{P}5$. With a given solution accuracy $\varepsilon $, the worst case complexity to solve the SDR beamforming problem $\mathcal{P}3$ with the primal-dual interior-point method is $\mathcal{O}\left( N_c^{6.5}N_t^{6.5}\log \left( 1/\varepsilon  \right) \right)$, whereas the complexity to solve ZF beamforming problem $\mathcal{P}5$ is $\mathcal{O}\left( N_c^{6.5}\mathrm{C}_{K}^{N_c}\log \left( 1/\varepsilon  \right) \right)$. Briefly omitting the $\mathrm{C}_{K}^{N_c}$ term caused by the iteration of direction sets, the worst-case complex flops for ZF beamforming is greatly lowered by a factor of $N_{t}^{6.5}$ compared to SDR beamforming. Since the flops caused by constraints is negligible in both SDR and ZF algorithms, the significant complexity reduction is mostly attributed to the sharp decrease of variable element numbers. Specifically, problem $\mathcal{P}3$ has $N_c$ matrix variables and one scalar variable to be optimized with totally $N_{c}N_{t}^2+1$ elements, whereas problem $\mathcal{P}5$ only has $2N_c+1$ scalar variables to be optimized with totally $2N_c+1$ elements.

\section{Numerical Results}
In this section, we provide numerical results to illustrate the derived CRBs and evaluate the performance of the proposed beamforming design. If not specifically indicated, we consider an ISAC BS equipped with ${{N}_{t}}=16$ transmit antennas and ${{N}_{r}}=16$ receive antennas. The transmit power is ${{P}_{t}}=0$ dBW, the noise powers of communication and sensing are $\sigma _{n}^{2}=\sigma _{s}^{2}=-80$ dBm. There exist $N_c=4$ downlink CUs located at $\boldsymbol{\phi}=\left[ -{{60}^\circ},-{{35}^\circ},{{35}^\circ},{{60}^\circ} \right]$. The path loss of each CU is 100 dB. The LoS signal component of each CU occupies $\mathrm{90\%}$ of the received signal strength, and there exist $L=6$ paths between the BS and each CU. The SINR threshold is set as $\Gamma =10$ dB. An ET with vehicle shape is assumed to be located ${{d}_{o}}=27$ m away from the BS with a direction of ${{\phi }_{o}}={0}^\circ$ and an orientation of $\varphi ={{0}^\circ}$. The whole LoS contour is divided into $K=8$ disjoint subsections. The length and width of the target contour are approximately 5 m and 2 m, respectively, and the TFS contour is parameterized by $Q = 8$ harmonics. The observation period is set as $t_s = 1\text{ s}$.

\begin{figure}[!t]
\centering
\includegraphics[width=3.5in]{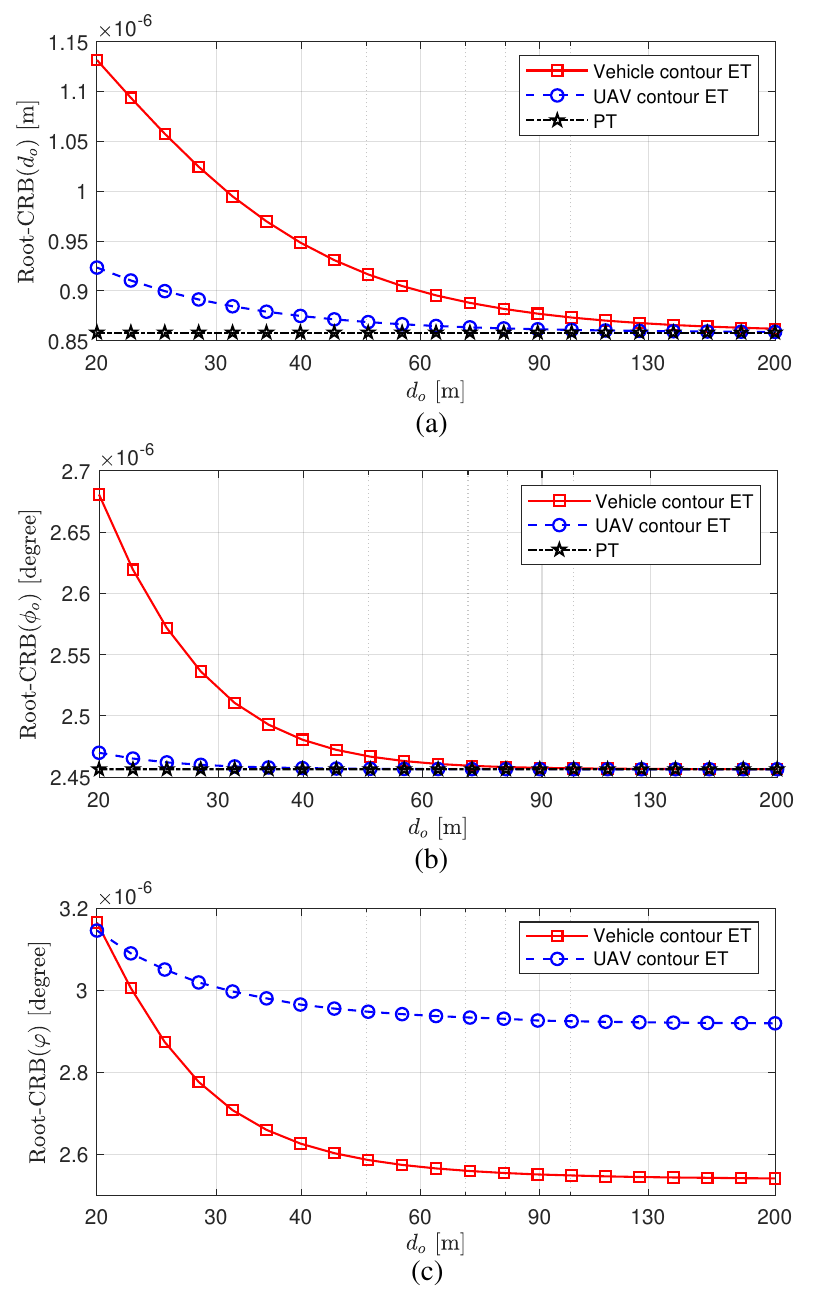}
\caption{\textcolor{black}{CRBs of target parameters versus distance. (a) range; (b) direction; (c) orientation.}}
\label{Fig3}
\end{figure}

For comparsion, we define the SDR beamforming design proposed in Section IV as the `CRB-min Design'. Two beampattern-approaching beamforming designs proposed in \cite{Hua23TVT} and \cite{Liu20TSP} are selected as benchmarks, referred as `Average Design' and `Average-Null Design' respectively. Average Design in \cite{Hua23TVT} tends to allocate equal energy towards areas of interest, while, on the basis of \cite{Hua23TVT}, Average-Null Design in \cite{Liu20TSP} additionally prefers the energy transmitted elsewhere to be zero. We employ a main beam with a half-power beamwidth of $\Delta ={10}^\circ$ in the simulation. The global direction grids are uniformly sampled among $\left[ -{90}^\circ,{90}^\circ \right]$ with an interval of $1^\circ$.

\subsection{Numerical analysis of CRB}
In this subsection, we report the numerical analysis of CRB of ET parameters. Two different ET contours, namely, vehicle and unmanned aerial vehicle (UAV), are chosen for the numerical computation of range, direction, and orientation CRBs. The contour parameters for the vehicle shape are $\mathbf{m}=$[2.05, -0.002, 0.5, 0, 0.056, 0.001, -0.125, 0.003]$^T$, $\mathbf{n}=$[1.24, -0.001, 0.335,-0.001, 0.124, -0.001, 0.018, 0]$^T$, while those for the UAV shape are $\mathbf{m}=$[0.797, 0, -0.153, 0, -0.272, 0, -0.12, 0, 0.045]$^T$, $\mathbf{n}=$[0.797, 0, 0.153, 0, -0.272, 0, 0.12, 0, 0.045]$^T$. To facilitate the comparison between ET and PT, we set the same received energy for all considered targets by normalizing the ET contour lengths with $\sum\nolimits_{k=1}^{K}l_k=1$. From $(\ref{CRB d_o})-(\ref{CRB varphi})$, we learn that CRB scales to the fourth power of range and changes sensitively with the range variations. For ease of analysis, we set the radar SNR ${{\gamma }_{s}}={{N}_{r}}{{P}_{t}}/\left( d_{o}^{4}\sigma _{s}^{2} \right)$ a constant regardless of the distance between the target and BS. The ETs and PT are assumed to move along $+y$ axis from $[0,20]$ to $[0,200]$. All the points are averaged over 2000 beamforming matrix realizations.

Fig. \ref{Fig3}(a)$-$Fig. \ref{Fig3}(c) illustrate the CRBs for range $d_o$, direction $\phi_o$, and orientation $\varphi$ of two different contours, respectively, as a function of the distance between the target and the BS. The CRBs of PT are also plotted for comparison. From Fig. 3(a), it is seen that while the range CRB of PT is actually invariant of distances, the range CRBs of ETs have greatly diversified response towards target distances. \textcolor{black}{Particularly, the range CRB of the UAV-shaped ET first decreases sharply, then converges to a constant as the distance increases. A similar and more obvious decreasing trend can be observed for the range CRB of the vehicle-shaped ET throughout the target movement. Note that the range CRBs of ETs converge to the range CRB of PT at large distances, which verifies the equivalence between the CRBs of faraway ET and PT.}

From Fig. \ref{Fig3}(b), it is seen that the direction CRBs of the two different ETs share similar decreasing trends with the range CRBs when the distance increases, and they both converge to the direction CRB of PT at large distance, say $d_o\geq70\text{m}$. By comparing Fig. \ref{Fig3}(b) and Fig. \ref{Fig3}(c), we also find that the orientation CRB is constantly larger than the corresponding direction CRB, which corresponds to the CRB analysis concluded in Section III. \textcolor{black}{The obtained results are substantially different for the SIMO radar sensing scenario \cite{Garcia22TSP}, for instance the CRBs of ET with a known contour almost share the same value with those of PT.}

 \subsection{Comparison Between Proposed Beamformer and Benchmark Beamformers}
 In this section, we compare different beamforming schemes from three aspects, i.e. the estimation accuracy of unknown parameter, the beampattern, and the sensing CRB.

\begin{figure}[!t]
\centering
\includegraphics[width=3.5in]{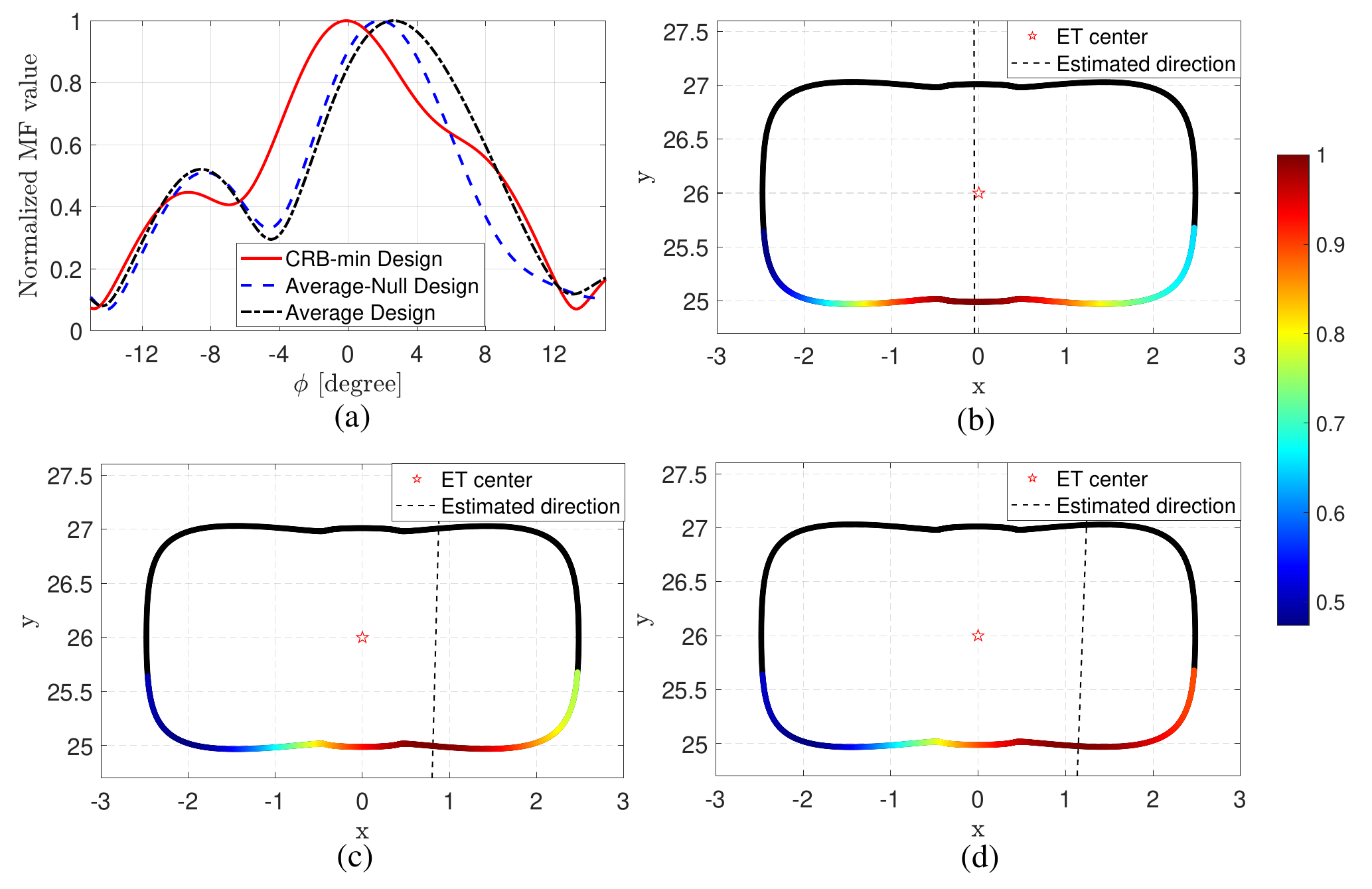}
\caption{The MF-based estimation results of vehicle contour target. The general parameters of ET are $d_o=30m$, $\phi_o=0^{\circ}$, $\varphi=0^{\circ}$. (a) The normalized MF output at various directions. The ET direction estimation results of (b) CRB-min Design; (c) Average-Null Design; (d) Average Design. The colors in (b)-(d) corresponds to the normalized MF value projected onto the contour element with all considered directions. The estimated direction is a straight line extended from the BS to the contour element with maximal MF value.}
\label{Fig4}
\vspace{-0.2cm}
\end{figure}

\begin{figure}[!t]
\centering
\includegraphics[width=3.5in]{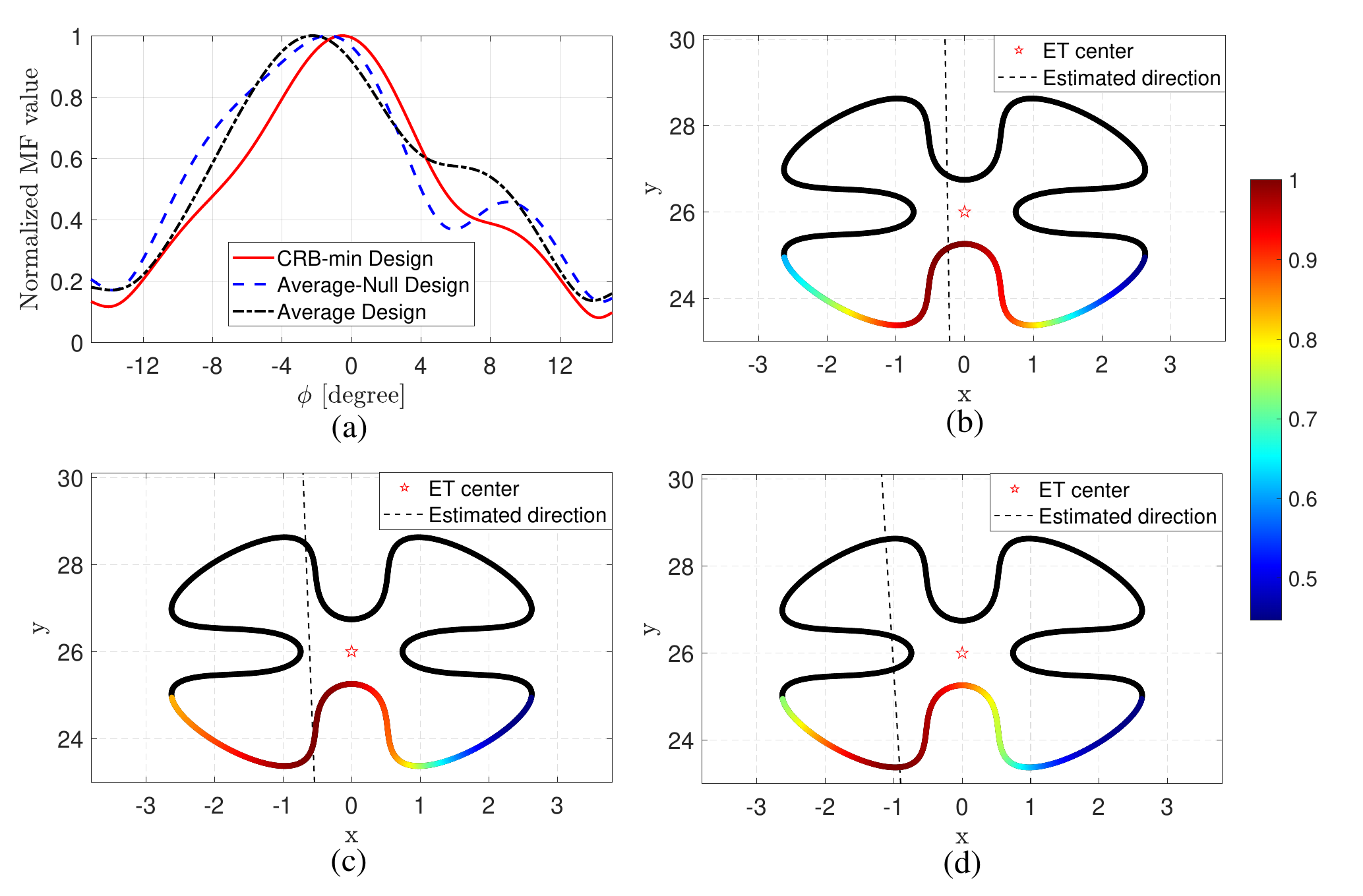}
\caption{The MF-based estimation results of UAV contour target. The general parameters of ET are $d_o=26m$, $\phi_o=0^{\circ}$, $\varphi=5^{\circ}$. (a) The normalized MF output at various directions. The ET direction estimation results of (b) CRB-min Design; (c) Average-Null Design; (d) Average Design.}
\label{Fig5}
\vspace{-0.5cm}
\end{figure}
 
 We start with the evaluation of estimating performance for ET direction, and examine the effectiveness of the corresponding CRB. A classic MF estimator is applied to extract the target direction from echo signals. Define $\mathbf{Y}=\left[ {{\mathbf{y}}_{s}}\left( 1 \right),...,{{\mathbf{y}}_{s}}\left(T \right) \right]$ as the echo signal sequence plus noise received by the BS with a certain observation period $t_s$ and a total length of $T$.\footnote{\textcolor{black}{For the ET case, it is infeasible to extract directions ${\phi_k}_{k=1}^K$ of all scatterers along the ET, contained in the channel matrix $\mathbf{H}$, from the echo signal $\mathbf{Y}=\mathbf{H}\mathbf{X}+\mathbf{N}$ if $\mathrm{rank}(\mathbf{X})=N_c< K$. Here, $\mathbf{C} \in\mathbb{C}^{N_c\times T}$ is the symbol sequence for $N_c$ CUs with length $T > N_t \geq N_c$. A common practice is to introduce extra radar dedicated probing streams in $\mathbf{x}$ \cite{Liu21TSP} and extend the degrees of freedom of $\mathbf{X}$ to its maximum, e.g., $N_t$. Nevertheless, if we only aim to estimate the center direction $\phi_o$ of the ET, it is feasible even in the case of $N_c=1$.}} Accordingly, the MF estimator can be presented as
\begin{equation}
{\hat{\phi }_{o}}=\arg \underset{\phi }{\mathop{\max }}\,\left\| {{\mathbf{a}}^{H}}\left( \phi  \right)\mathbf{Y} \right\|.
\end{equation}

The above direction estimation can be obtained via exhaustive search over a fine grid of $\phi\in\left[ -{{90}^\circ},{{90}^\circ} \right]$ with an interval of ${{0.1}^\circ}$. Fig. \ref{Fig4} and Fig. \ref{Fig5} present the target direction estimation performance of different beamforming schemes for vehicle contour and UAV contour, respectively. It can observed that our proposed CRB-min Design achieves better estimation performance over benchmark designs. The highest peaks of benchmarks obviously move leftwards or rightwards from the center, whereas the peak of our proposed CRB-min Design points straightly towards the center with the minimum deviation. Note that the estimated direction typically has significant deviation in the UAV contour case, indicating that the geometry information of ET is poorly utilized by the MF estimator.

\begin{figure}[!t]
%\centering
\includegraphics[width=3.5in]{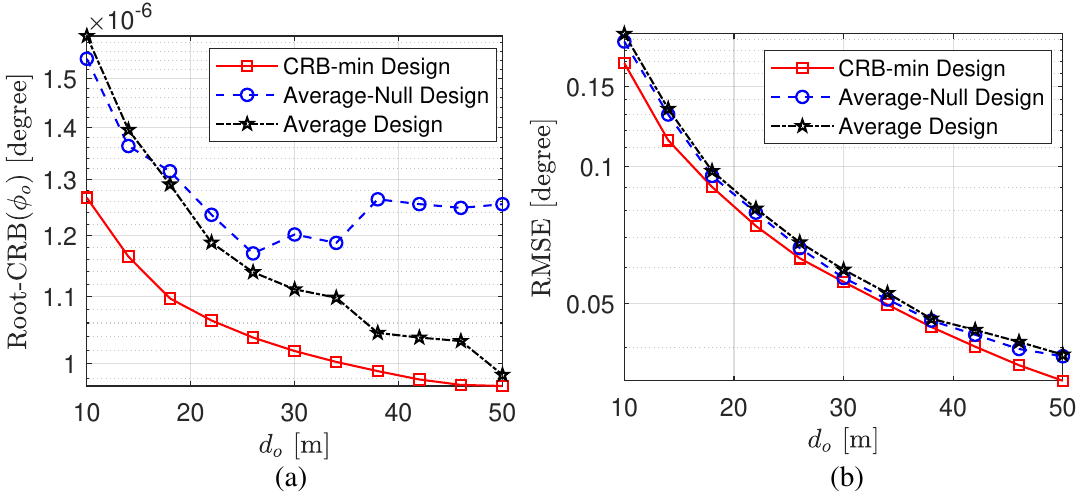}
\caption{(a) RMSE of the MF-based estimator; (b) Root-CRB of target direction parameters as a function of the distance between target and BS.}
\label{Fig6}
\vspace{-0.5cm}
\end{figure}

The CRB and MSE of direction estimation are shown in Fig. \ref{Fig6} as a function of distance, where each point is obtained based on 20000 independent Monte Carlo simulation trials. The target is assumed to move along the $+y$ axis in the global coordinate, and the radar SNR $\gamma_s$ is set constant for analysis. We can observe that CRB-min Design constantly outperforms benchmark designs with lower CRB and MSE. The CRB of Average-Null Design has random fluctuations with $d_o\geq25\text{m}$, while that of Average Design has a significant gap compared to CRB-min Design with $d_o\leq35\text{m}$. This reveals that, the unbiased direction preference of benchmark designs naturally leads to great uncertainty in direction estimation. On the other hand, our proposed CRB-min Design tends to steer the main beam towards the central direction of ET, which eliminates the potential echoes reflected from clusters elsewhere and contributes to robust direction estimation performance.

By comparing Fig. \ref{Fig6}(b) with Fig. \ref{Fig6}(a), it is seen that the direction estimation error is indeed lower bounded by the corresponding CRB, which validates the derived CRB. Nevertheless, we note that the estimation MSE is only loosely bounded by the theoretical lower bound. This suggests that MF-based estimator is not very suitable for ET and more sophisticated estimator is to be explored.
 
 Next, we examine the transmit beampatterns of three ET beamforming designs, along with the CRB-min Design for PT, as is presented in Fig. \ref{Fig7}. Considering the main-lobe characteristic for sensing an ET, our proposed CRB-min Design succeeds to steer a strong main lobe towards the ET direction, whereas the benchmark designs either steer the main-lobe towards a deviated direction from the ET center, or generate a distorted main beam with an obvious gap in the desired ET center direction. For the side-lobe characteristic, we can observe that CRB-min Design and Average Design effectively suppress the energy transmitted elsewhere, whereas Average-Null Design has severe side lobes and would eventually slow down the overall ISAC performance. \textcolor{black}{Compared with ET, the idea beampattern for PT is characterized by a sharp beam towards the target center, which is only capable of illuminating a small area.}

\begin{figure}[!t]
\centering
\includegraphics[width=3.5in]{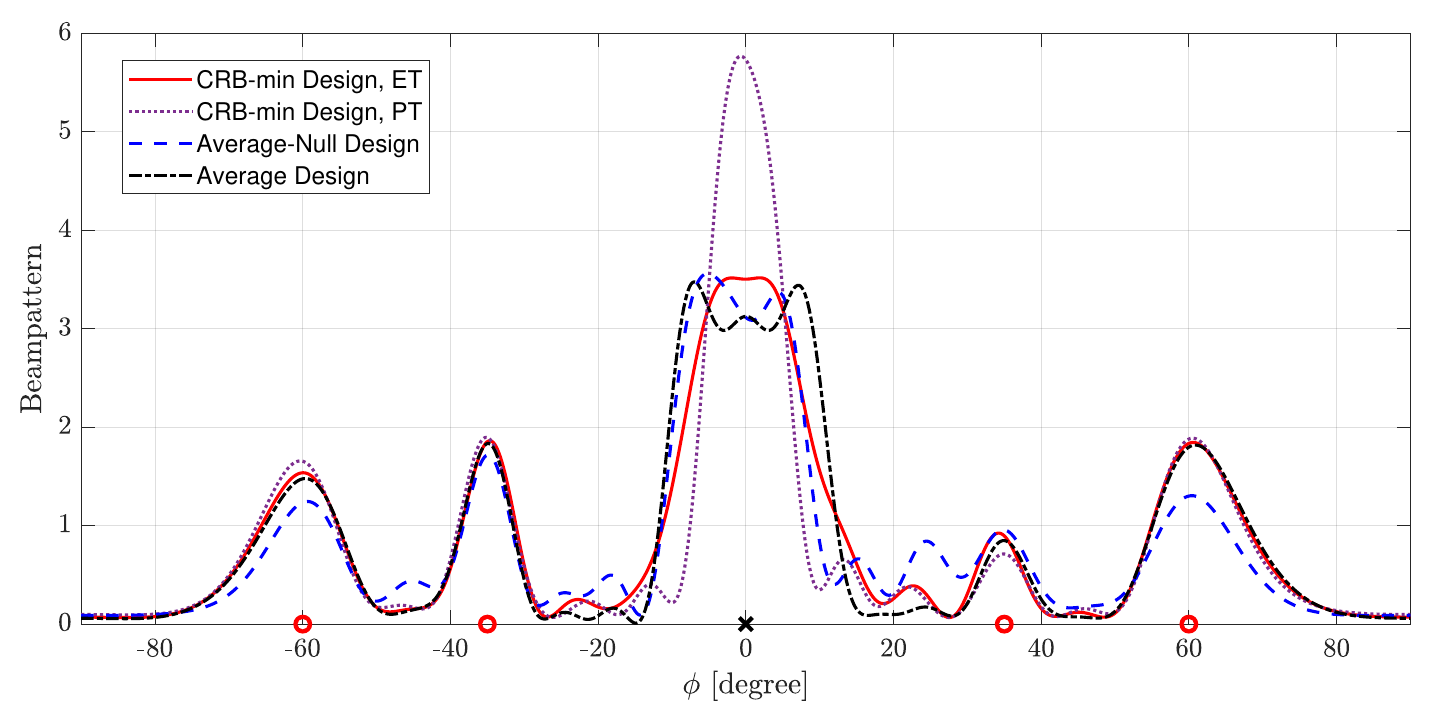}
\caption{\textcolor{black}{Beampatterns of varied ISAC beamforming designs. The red circle and black X mark refer to the directions of CUs and that of ET or PT, respectively.}}
\label{Fig7}
\vspace{-0.4cm}
\end{figure}

Moreover, Fig. \ref{Fig8} presents the CRB value versus the number of CUs. It can be noted that the sensing CRB is barely affected by communication demands in few-CU region. Nevertheless, once the number of CUs exceeds $N_c=5$, \textcolor{black}{the sensing performance can be greatly deteriorated since the power left for sensing may fail to illuminate the entire ET.} Further, with a fixed number of CUs, we observe that CRB-min Design constantly outperforms benchmark designs with lower CRBs.

\begin{figure}[!t]
\centering
\includegraphics[width=3.5in]{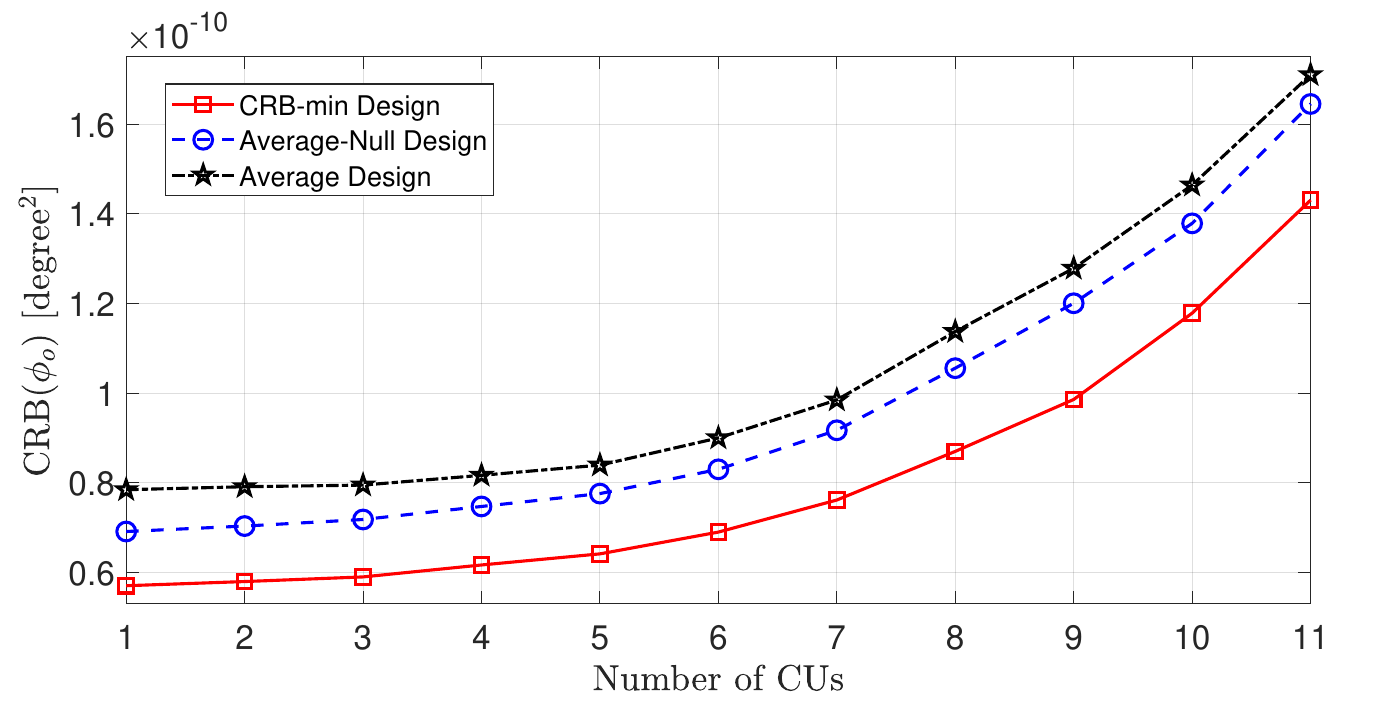}
\caption{CRBs of CRB-min and benchmark beamforming designs as a function of CU numbers.}
\label{Fig8}
\vspace{-0.2cm}
\end{figure}

As a closing remark, the fundamental difference of three beamforming designs roots in the beampattern preference. For the ET case, the beamformers should both capture the target direction and cover the whole contour. The estimation of specific directions prefers sharp beams, whereas the reception of echo signals reflected off the LoS contour requires sufficiently wide beams. The benchmark beampattern designs consider the contour coverage requirement by allocating equal energy to a given range, yet ignore the direction-capture requirement. In contrast, our proposed design achieves a balance between the sharp and wide beam contradictory, generating a desirable transmit beampattern with the minimum MSE and CRB.

\subsection{Comparison Between SDR and ZF Beamformers}
In this section, we compare the communication and sensing performance of SDR and ZF beamforming algorithms in NLoS communication channels, where the LoS link between each CU and the BS is completely blocked.

\begin{figure}[!t]
\centering
\includegraphics[width=3.5in]{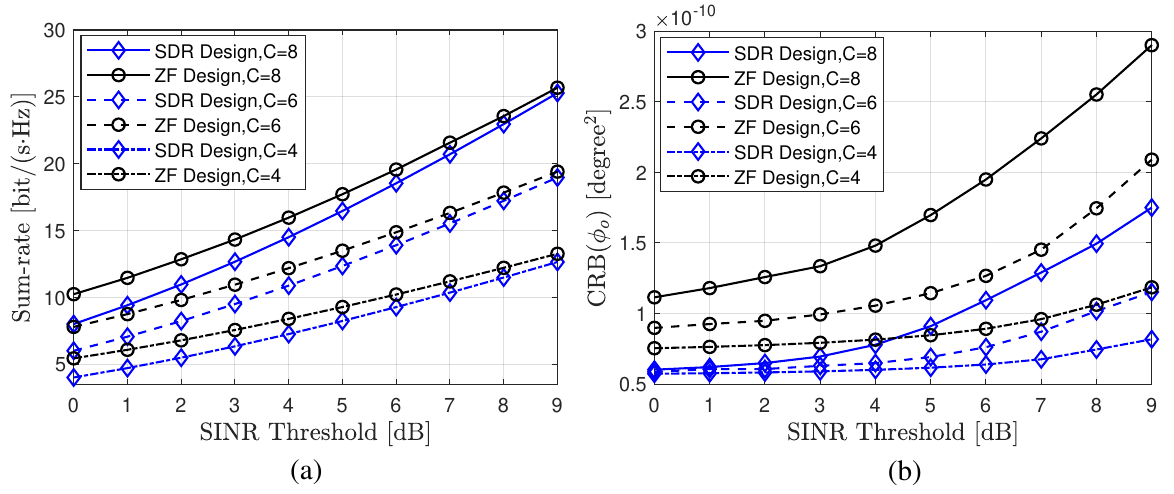}
\caption{(a) Sum rates and (b) achieved CRBs of SDR and ZF beamforming designs as a function of SINR.}
\label{Fig9}
\vspace{-0.5cm}
\end{figure}

Fig. \ref{Fig9} present the sum-rates and CRBs of the proposed two beamformers. Observed from Fig. \ref{Fig9}(a), ZF algorithm achieves higher sum-rate compared to SDR algorithm, especially in the low-SINR region. This gap gradually narrows as the SINR demand increases and approaches zero when $\Gamma =10$ dB. The above result follow the nature of ZF algorithm where ZF aims to completely eliminate the MU interference regardless of the given SINR threshold. In order to meet dual constraints of SINR threshold and interference elimination, ZF algorithm typically needs to allocate more power to communication functionality owing to the complicated multi-path channel conditions, resulting in higher sum-rates.

From Fig. \ref{Fig9}(b), we can learn that CRB increases for larger SINR threshold, indicating that better communication quality intuitively comes at the price of sensing performance loss in ISAC system. Moreover, the CRB of SDR scheme is lower than that of ZF scheme and such gap is enlarged with the increase of SINR threshold and the number of CUs. This is
consistent with our analysis in Fig. \ref{Fig9}(a), where limited power is allocated to radar sensing in NLoS channels owing to the strict communication demands.

To conclude, the low-complexity ZF beamforming scheme inevitably leads to sensing performance degradation compared to the SDR scheme. Yet, since the CRB loss is acceptable in most scenarios, we can utilize SDR beamforming algorithm under strict target estimation requirements, and flexibly switch to ZF beamforming scheme with limited computation resources.
\vspace{-0.5em}

\section{Conclusion}
This paper considers the transmit beamforming design in an MU-MIMO ISAC system, where one BS communicates with multiple CUs and senses one ET at the same time. Building on the basis of TFS contour modeling, we first derive closed-form CRBs on the central range, direction, and orientation of the ET. Then, the transmit beamformers are optimized to minimize the CRB on the target direction estimation, under the constraint of SINR threshold for communication and the beam coverage requirement for sensing. To solve the non-convex optimization problem, we first employ the SDR technique and propose a rank-one solution extraction scheme for non-tight relaxation scenarios. To reduce the calculation complexity, we further propose a ZF beamforming algorithm, which utilizes the degrees of freedom of communication channel null space to perform the sensing task.

Through numerical simulation, we explore the diverse CRB characteristics of vehicle and UAV shaped ETs, and verify the effectiveness of the derived CRB for ET. Among all circumstances considered, our proposed CRB-min design constantly achieves better estimation performance over benchmark beamforming designs, featured by lower CRB and MSE. Last but not least, compared with the our SDR beamformers, the ZF beamformers trade minor CRB loss for great computation efficiency, serving as an alternative algorithm for resource limited scenarios.\vspace{-0.4cm}
\\
{\appendices
\section*{Appendix I \\ The Proof of Theorem 1}

\textcolor{black}{Based on the derivation framework in \cite[Appendix I]{Garcia22TSP}, the overall expression for the CRB is presented in Appendix I-A, followed by the detailed derivations of the intermediate variables in Appendices I-B through I-E.}

\subsection{General Derivation of $\mathbf{J}\left( {{\boldsymbol{\kappa }}_{1}} \right)$}
Define ${{\mathbf{q}}_{k}}=\mathbf{b}\left( {{\phi }_{k}} \right){{\mathbf{a}}^{H}}\left( {{\phi }_{k}} \right)\mathbf{x}\left( t-{2{{d}_{k}}} / {c_0} \right)$, the echo signal in $(\ref{eq_ll8})$ can be expressed as $\mathbf{s}\left( t \right)=g\sum\nolimits_{k=1}^{K}{\sqrt{{l}_{k}}}{{\alpha }_{k}}{\mathbf{q}_{k}}$. Under the definition of ${{\mathbf{F}}_{{{\boldsymbol{\kappa }}_{1}}}}=-\mathbb{E}\left[ \Delta _{{{\boldsymbol{\kappa }}_{1}}}^{{{\boldsymbol{\kappa }}_{1}}}\log p\left( {{\mathbf{y}}_{s}}|\boldsymbol{\xi } \right) \right]$, we have
\begin{align}
   {{\mathbf{F}}_{{{\boldsymbol{\kappa }}_{1}}}}&=\mathbb{E}\left[ \frac{2{{g}^{2}}}{\sigma _{s}^{2}}\mathcal{R}\int_{{{t}_{s}}}{\sum\limits_{{{k}_{1}}=1}^{K}\sum\limits_{{{k}_{2}}=1}^{K}{\sqrt{{{l}_{{{k}_{1}}}}{{l}_{{{k}_{2}}}}}\alpha _{{{k}_{1}}}^{*}{{\alpha }_{{{k}_{2}}}}\frac{\partial \mathbf{q}_{{{k}_{1}}}^{H}}{\partial {{\boldsymbol{\kappa }}_{1}}}\frac{\partial {{\mathbf{q}}_{{{k}_{2}}}}}{\partial \boldsymbol{\kappa }_{1}^{T}}\mathrm{d}t}} \right] \nonumber \\ 
 & =\frac{2{{g}^{2}}}{\sigma _{s}^{2}}\mathcal{R}\int_{{{t}_{s}}}{\sum\limits_{{{k}_{1}}=1}^{K}\sum\limits_{{{k}_{2}}=1}^{K}{\mathbb{E}\left[ \alpha _{{{k}_{1}}}^{*}{{\alpha }_{{{k}_{2}}}} \right]\sqrt{{{l}_{{{k}_{1}}}}{{l}_{{{k}_{2}}}}}\frac{\partial \mathbf{q}_{{{k}_{1}}}^{H}}{\partial {{\boldsymbol{\kappa }}_{1}}}\frac{\partial {{\mathbf{q}}_{{{k}_{2}}}}}{\partial \boldsymbol{\kappa }_{1}^{T}}}}\mathrm{d}t \nonumber\\ 
 & =\frac{2{{g}^{2}}}{\sigma _{s}^{2}}\sum\nolimits_{k=1}^{K}{{{l}_{k}}}\mathcal{R}\int_{{{t}_{s}}}{\frac{\partial \mathbf{q}_{k}^{H}}{\partial {{\boldsymbol{\kappa }}_{1}}}\frac{\partial {{\mathbf{q}}_{k}}}{\partial \boldsymbol{\kappa }_{1}^{T}}\mathrm{d}t},
\end{align}
where the third equality holds since ${{\alpha }_{k}} \sim \mathcal{CN}\left( 0,1 \right)$, $\mathbb{E}\left[ \alpha _{{{k}_{1}}}^{*}{{\alpha }_{{{k}_{2}}}} \right]=1$ for ${{k}_{1}}={{k}_{2}}$ and $\mathbb{E}\left[ \alpha _{{{k}_{1}}}^{*}{{\alpha }_{{{k}_{2}}}} \right]=0$ for ${{k}_{1}}\ne {{k}_{2}}$. We notice that the calculation of $\partial \mathbf{q}_{k}^{H}/\partial {{\boldsymbol{\kappa }}_{1}}$ is complicated since ${{\mathbf{q}}_{k}}$ is directly linked with intermediate variables ${{\boldsymbol{\Theta }}_{k}}=\left[ {{d}_{k}},{{\phi }_{k}} \right]^T$ which depend on the contour parameters $\left\{ {{a}_{q}},{{b}_{q}} \right\}_{q=1}^{Q}$. Thus, following the chain rule $ \partial\mathbf{q}_{k}^{H}/ \partial{{\boldsymbol{\kappa }}_{1}}=\frac{\partial \mathbf{q}_{k}^{H}}{\partial {{\boldsymbol{\Theta }}_{k}}}\frac{\partial \boldsymbol{\Theta }_{k}^{T}}{\partial {{\boldsymbol{\kappa }}_{1}}}$, we can rewrite the FIM as follows \vspace{-0.2cm}
\begin{equation}
\label{eq51}
    {{\mathbf{F}}_{{{\boldsymbol{\kappa }}_{1}}}}=\frac{2{{g}^{2}}}{\sigma _{s}^{2}}\sum\limits_{k=1}^{K}{{{l}_{k}}}\frac{\partial \boldsymbol{\Theta }_{k}^{T}}{\partial {{\boldsymbol{\kappa }}_{1}}}\mathcal{R}\left(\int_{{{t}_{s}}}{\frac{\partial \mathbf{q}_{k}^{H}}{\partial {{\boldsymbol{\Theta }}_{k}}}\frac{\partial {{\mathbf{q}}_{k}}}{\partial \boldsymbol{\Theta }_{k}^{T}}\mathrm{d}t} \right)\frac{\partial {{\boldsymbol{\Theta }}_{k}}}{\partial \boldsymbol{\kappa }_{1}^{T}}.
\end{equation}
\vspace{-0.2cm}

With the formulas in Appendix I-B to Appendix I-E, $(\ref{eq51})$ can be further written as
\begin{align}
{{\mathbf{F}}_{{{\boldsymbol{\kappa }}_{1}}}}=&\frac{2{{g}^{2}}N_{r}}{\sigma _{s}^{2}}\sum\limits_{k=1}^{K}{{{l}_{k}}\mathbf{a}_{k}^{H}{{\mathbf{R}}_{x}}{{\mathbf{a}}_{k}}} \nonumber\\
&\left[{{\mathbf{\mu }}_{k}},{{\mathbf{\eta }}_{k}} \right]\left[ \begin{matrix}
{{Z}_{2}} & 0  \\
0 & t_s({{Z}_{1,k}}+\frac{\mathbf{\dot{a}}_{k}^{H}{{\mathbf{R}}_{x}}{{{\mathbf{\dot{a}}}}_{k}}}{\mathbf{a}_{k}^{H}{{\mathbf{R}}_{x}}{{\mathbf{a}}_{k}}})  \\
\end{matrix} \right]\left[ \begin{matrix}
\mathbf{\mu }_{k}^{T}  \\
\mathbf{\eta }_{k}^{T}  \\
\end{matrix} \right] \nonumber\\ 
 =&\frac{2{{g}^{2}}N_{r}}{\sigma _{s}^{2}}\sum\limits_{k=1}^{K}{{{l}_{k}}\mathbf{a}_{k}^{H}{{\mathbf{R}}_{x}}{{\mathbf{a}}_{k}}} \nonumber \\
&\left[ {{Z}_{2}}{{\mathbf{\mu }}_{k}}\mathbf{\mu }_{k}^{T}+\left( {{Z}_{1,k}}+\frac{\mathbf{\dot{a}}_{k}^{H}{{\mathbf{R}}_{x}}{{{\mathbf{\dot{a}}}}_{k}}}{\mathbf{a}_{k}^{H}{{\mathbf{R}}_{x}}{{\mathbf{a}}_{k}}} \right){{t_s\mathbf{\eta }}_{k}}\mathbf{\eta }_{k}^{T} \right].
\label{eq_ll29}
\end{align}

Similar to the steps in $(\ref{eq_ll29})$, we obtain ${{f}_{g}}$ and ${{\mathbf{f}}_{{{\boldsymbol{\kappa }}_{1}},g}}$ as
\begin{align}
&{{f}_{g}}=\frac{2{{N}_{r}}t_s}{\sigma _{s}^{2}}\sum\limits_{k=1}^{K}{{{l}_{k}}\mathbf{a}_{k}^{H}{{\mathbf{R}}_{x}}{{\mathbf{a}}_{k}}},\\
&{{\mathbf{f}}_{{{\boldsymbol{\kappa }}_{1}},g}} =\frac{2g}{\sigma _{s}^{2}}\sum\limits_{k=1}^{K}{{{l}_{k}}\frac{\partial \boldsymbol{\Theta }_{k}^{T}}{\partial {{\boldsymbol{\kappa }}_{1}}}}\mathcal{R} \int_{{{t}_{s}}}{\frac{\partial \mathbf{q}_{k}^{H}}{\partial {{\boldsymbol{\Theta }}_{k}}}{{\mathbf{q}}_{k}}\mathrm{d}t}  \nonumber \\
&\hspace{0.7cm}=\left[0,\hspace{0.2cm}\frac{2g{{N}_{r}}t_s}{\sigma _{s}^{2}}\sum\limits_{k=1}^{K}{{{l}_{k}}\mathcal{R}\left( \mathbf{\dot{a}}_{k}^{H}{{\mathbf{R}}_{x}}{{\mathbf{a}}_{k}} \right)},\hspace{0.2cm}0\right]^T.
\label{eq_30}
\end{align}

Finally, combining $(\ref{eq_ll29})-(\ref{eq_30})$ with $(\ref{eq_31})$, we can get the closed-form CRBs in $(\ref{CRB d_o})-(\ref{CRB varphi})$.

\subsection{Derivation of ${\partial \boldsymbol{\Theta }_{k}^{T}}/{\partial {{\boldsymbol{\kappa }}_{1}}}$}
We decompose $\partial \boldsymbol{\Theta }_{k}^{T}/\partial {{\boldsymbol{\kappa }}_{1}}=\left[ \partial {{d}_{k}}/\partial {{\boldsymbol{\kappa }}_{1}},\partial {{\phi }_{k}}/\partial {{\boldsymbol{\kappa }}_{1}} \right]$ as
\vspace{-0.3cm}
\begin{align}
{{\mathbf{\mu }}_{k}}&=\frac{\partial {{{d}}_{k}}}{\partial {{\boldsymbol{\kappa }}_{1}}}=\left[ 
   \frac{\partial {{{d}}_{k}}}{\partial {{d}_{o}}},\frac{\partial {{{d}}_{k}}}{\partial {{\phi }_{o}}},\frac{\partial {{{d}}_{k}}}{\partial \varphi }  \right]^T \nonumber \\
 &=\frac{1}{{{d}_{k}}}\Big[{{d}_{o}}+d_{o}^{-1}\boldsymbol{\rho}_{k}^{T}{{\mathbf{V}}^{T}}{{\mathbf{p}}_o},\boldsymbol{\rho}_{k}^{T}{{\mathbf{V}}^{T}}{{\mathbf{p}}_{\bot}},\boldsymbol{\rho}_{k}^{T}{{\mathbf{V}}^{T}}{{\mathbf{p}}_{\bot}}\Big]^T,\nonumber\\
&\approx \Big[ 1,\boldsymbol{\rho}_{k}^{T}{{\mathbf{V}}^{T}}{{\mathbf{p}}_{\bot }}/{{d}_{o}},\boldsymbol{\rho}_{k}^{T}{{\mathbf{V}}^{T}}{{\mathbf{p}}_{\bot }}/{{d}_{o}} \Big]^T,\\
{{\mathbf{\eta}}_{k}}&=\frac{\partial{{\phi}_{k}}}{\partial {{\boldsymbol{\kappa }}_{1}}}=\left[ 
   \frac{\partial {{\phi }_{k}}}{\partial {{d}_{o}}},\frac{\partial {{\phi }_{k}}}{\partial {{\phi }_{o}}},\frac{\partial{\phi}_{k}}{\partial \varphi} \right]^T \nonumber \\
   &=\frac{1}{d_{k}^{2}}\Big[d_{o}^{-1}\boldsymbol{\rho}_{k}^{T}{{\mathbf{V}}^{T}}{{\mathbf{p}}_{\bot }},d_{o}^{2}+\boldsymbol{\rho}_{k}^{T}{{\mathbf{V}}^{T}}{{\mathbf{p}}_{o}},\boldsymbol{\rho}_{k}^{T}{{\mathbf{V}}^{T}}{{\mathbf{p}}_{k}}\Big]^T,\nonumber\\
&\approx \Big[0,1,0\Big]^T,
\end{align}
where parameters $\left\{ {{d}_{k}},{{\phi }_{k}},{{\boldsymbol{\rho}}_{k}},{{\mathbf{p}}_{k}} \right\}$ are respectively the range, direction, local position and global position of the $k$-th contour section, ${{\mathbf{p}}_{\bot }}=\left( \begin{matrix} 0 & -1  \\ 1 & 0  \\ \end{matrix} \right){{\mathbf{p}}_{o}}$. \textcolor{black}{The approximations are valid under the assumption that $1/d_o\rightarrow 0$.}

\subsection{Derivation of $\mathcal{R}\int_{{{t}_{s}}}{\frac{\partial \mathbf{q}_{k}^{H}}{\partial {{\boldsymbol{\Theta }}_{k}}}\frac{\partial {{\mathbf{q}}_{k}}}{\partial \boldsymbol{\Theta }_{k}^{T}}\mathrm{d}t}$}
We start with the calculation of $\frac{\partial {{\mathbf{q}}_{k}}}{\partial \boldsymbol{\Theta }_{k}^{T}}=\left[ \frac{\partial {{\mathbf{q}}_{k}}}{\partial {{d}_{k}}},\frac{\partial {{\mathbf{q}}_{k}}}{\partial {{\phi }_{k}}} \right]$
\begin{align}
&\frac{\partial {{\mathbf{q}}_{k}}}{\partial {{d}_{k}}}=-\frac{2}{c}{{\mathbf{b}}_{k}}\mathbf{a}_{k}^{H}\mathbf{\dot{x}}\left( t-\frac{2{{d}_{k}}}{c_0} \right),\\
&\frac{\partial {{\mathbf{q}}_{k}}}{\partial {{\phi }_{k}}}=\left( {{{\mathbf{\dot{b}}}}_{k}}\mathbf{a}_{k}^{H}+{{\mathbf{b}}_{k}}\mathbf{\dot{a}}_{k}^{H} \right)\mathbf{x}\left( t-\frac{2{{d}_{k}}}{c_0} \right),
\end{align}
where $\mathbf{\dot{x}}\left( t \right)=\partial \mathbf{x}\left( t \right)/\partial t$, ${{\mathbf{\dot{a}}}_{k}}=\partial {{\mathbf{a}}_{k}}/\partial \phi $ and ${{\mathbf{\dot{b}}}_{k}}=\partial {{\mathbf{b}}_{k}}/\partial \phi $.

With the extra identities listed in Appendix I-E, we further derive $\mathcal{R}\int_{{{t}_{s}}}{\frac{\partial \mathbf{q}_{k}^{H}}{\partial {{\boldsymbol{\Theta }}_{k}}}\frac{\partial {{\mathbf{q}}_{k}}}{\partial \boldsymbol{\Theta }_{k}^{T}}\mathrm{d}t}$ as
\begin{align}
&\mathcal{R}\int_{{{t}_{s}}}{\frac{\partial \mathbf{q}_{k}^{H}}{\partial {{d}_{k}}}\frac{\partial {{\mathbf{q}}_{k}}}{\partial {{d}_{k}}}\mathrm{d}t}={{N}_{r}}{{Z}_{2}}\mathbf{a}_{k}^{H}{\mathbf{R}_{x}}{{\mathbf{a}}_{k}},\\
&\mathcal{R}\int_{{{t}_{s}}}{\frac{\partial \mathbf{q}_{k}^{H}}{\partial {{\phi }_{k}}}\frac{\partial {{\mathbf{q}}_{k}}}{\partial {{\phi }_{k}}}\mathrm{d}t}={{N}_{r}t_s}\left( {{Z}_{1,k}}\mathbf{a}_{k}^{H}{\mathbf{R}_{x}}{{\mathbf{a}}_{k}}+\mathbf{\dot{a}}_{k}^{H}{\mathbf{R}_{x}}{{{\mathbf{\dot{a}}}}_{k}} \right),\\
&\mathcal{R}\int_{{{t}_{s}}}{\frac{\partial \mathbf{q}_{k}^{H}}{\partial {{d}_{k}}}\frac{\partial {{\mathbf{q}}_{k}}}{\partial {{\phi }_{k}}}\mathrm{d}t}=0.
\end{align}

The complete matrix writes as
\begin{equation}
\mathcal{R}\int_{{{t}_{s}}}{\frac{\partial \mathbf{q}_{k}^{H}}{\partial {{\boldsymbol{\Theta }}_{k}}}\frac{\partial {{\mathbf{q}}_{k}}}{\partial \boldsymbol{\Theta }_{k}^{T}}\mathrm{d}t}={{N}_{r}}\mathbf{a}_{k}^{H}{\mathbf{R}_{x}}{{\mathbf{a}}_{k}}\left[ \begin{matrix}
   {{Z}_{2}} & 0  \\
   0 & t_s({{Z}_{1,k}}+\frac{\mathbf{\dot{a}}_{k}^{H}{{\mathbf{R}}_{x}}{{{\mathbf{\dot{a}}}}_{k}}}{\mathbf{a}_{k}^{H}{\mathbf{R}_{x}}{{\mathbf{a}}_{k}}})  \\
\end{matrix} \right].
\end{equation}

\subsection{Derivation of $\mathcal{R}\int_{{{t}_{s}}}{\frac{\partial \mathbf{q}_{k}^{H}}{\partial {{\boldsymbol{\Theta }}_{k}}}{{\mathbf{q}}_{k}}\mathrm{d}t}$}
Use the identities in Appendix I-E, we make following derivation
\begin{align}
&\mathcal{R}\int_{{{t}_{s}}}{\frac{\partial \mathbf{q}_{k}^{H}}{\partial {{d}_{k}}}{{\mathbf{q}}_{k}}\mathrm{d}t}=0,\\
&\mathcal{R}\int_{{{t}_{s}}}{\frac{\partial \mathbf{q}_{k}^{H}}{\partial {{\phi }_{k}}}{{\mathbf{q}}_{k}}\mathrm{d}t}={{N}_{r}t_s}\mathcal{R}\left( \mathbf{\dot{a}}_{k}^{H}{{\mathbf{R}}_{x}}{{\mathbf{a}}_{k}} \right) \nonumber \\
&\hspace{2.3cm}=\frac{{{N}_{r}t_s}}{2}\left( \mathbf{\dot{a}}_{k}^{H}{{\mathbf{R}}_{x}}{{\mathbf{a}}_{k}}+\mathbf{a}_{k}^{H}{{\mathbf{R}}_{x}}{{{\mathbf{\dot{a}}}}_{k}} \right).
\end{align}

The complete matrix writes as
\begin{equation}
\mathcal{R}\int_{{{t}_{s}}}{\frac{\partial \mathbf{q}_{k}^{H}}{\partial {{\boldsymbol{\Theta }}_{k}}}{{\mathbf{q}}_{k}}\mathrm{d}t}=\left[
   0,\hspace{0.1cm}\frac{{{N}_{r}}t_s}{2}\left( \mathbf{\dot{a}}_{k}^{H}{{\mathbf{R}}_{x}}{{\mathbf{a}}_{k}}+\mathbf{a}_{k}^{H}{{\mathbf{R}}_{x}}{{{\mathbf{\dot{a}}}}_{k}} \right)\right]^T.
\end{equation}

\subsection{Related Identities}
To facilitate the calculations, we take the center of the ULA as the reference point with zero phase. Thus, the transmit and receive antenna response can be presented as 
\begin{align}
&{{\mathbf{a}}_{k}}=\exp \left( j\pi \left( {{N}_{t}}-1 \right)/2\sin {{\phi }_{k}} \right) \nonumber\\
&\hspace*{.8cm}{{\Big[ 1,\exp\left( -j\pi \sin {{\phi }_{k}} \right),...,\exp\left( -j\left( {{N}_{t}}-1 \right)\pi \sin {{\phi }_{k}} \right) \Big]}^{T}},
\end{align}
\setlength\abovedisplayskip{1pt}
\setlength\belowdisplayskip{1pt}
\begin{align}
&{{\mathbf{b}}_{k}}=\exp \left( j\pi \left( {{N}_{r}}-1 \right)/2\sin {{\phi }_{k}} \right) \nonumber\\
&\hspace*{.8cm}{{\Big[ 1,\exp\left( -j\pi \sin {{\phi }_{k}} \right),...,\exp\left( -j\left( {{N}_{r}}-1 \right)\pi \sin {{\phi }_{k}} \right) \Big]}^{T}}.
 \end{align}

We can further calculate that ${{\left\| {{\mathbf{b}}_{k}} \right\|}^{2}}={{N}_{r}}$, and get
\begin{equation}
{{\mathbf{\dot{a}}}_{k}}=j\pi \cos \left( {{\phi }_{k}} \right)\text{diag}\left( \frac{{{N}_{t}}-1}{2},...,-\frac{{{N}_{t}}-1}{2} \right){{\mathbf{a}}_{k}},
\end{equation}
\begin{equation}
{{\mathbf{\dot{b}}}_{k}}=j\pi \cos \left( {{\phi }_{k}} \right)\text{diag}\left( \frac{{{N}_{r}}-1}{2},...,-\frac{{{N}_{r}}-1}{2} \right){{\mathbf{b}}_{k}},
\end{equation}
\begin{equation}
{{\left\| {{{\mathbf{\dot{b}}}}_{k}} \right\|}^{2}}=\frac{{{\cos }^{2}}\left( {{\phi }_{k}} \right){{\pi }^{2}}\left( N_{r}^{3}-{{N}_{r}} \right)}{12}={{N}_{r}}{{Z}_{1,k}}.
\end{equation}

Next, with the basic assumption in $(\ref{eq2})$, it is clear that $\int_{t_s} \mathbf{c}(t) \mathbf{c}^H (t) \mathrm{d}t = t_s\mathbf{I}_{N_c}$. Since we are considering a sufficient long observation period $t_s$, the following Fourier transform derivation holds when the integration is calculated within $t_s$
\begin{align}
  & \int_{t_s}{c_{{{n}_{1}}}^{*}\left( t-\tau  \right)\dot{c}_{{n}_{2}}\left( t-\tau  \right)\mathrm{d}t} \nonumber\\ 
 & \approx\int_{-\infty }^{+\infty }{{{\left[ {{C}_{{{n}_{1}}}}\left( f \right){{e}^{-j2\pi \tau f}} \right]}^{*}}j2\pi f{{C}_{{{n}_{2}}}}\left( f \right){{e}^{-j2\pi \tau f}}\mathrm{d}f} \nonumber\\ 
 & =j2\pi \int_{-\infty }^{+\infty }{fC_{{{n}_{1}}}^{*}\left( f \right){{C}_{{{n}_{2}}}}\left( f \right)\mathrm{d}f}=0, n_{1,2}=1,...,N_c.
 \label{eq_33}
\end{align}

For ${{n}_{1}}\ne {{n}_{2}}$, $(\ref{eq_33})$ holds since $\int_{-\infty }^{+\infty }{fC_{{{n}_{1}}}^{*}\left( f \right){{C}_{{{n}_{2}}}}\left( f \right)\mathrm{d}f}=0$ given the irrelevance between ${{C}_{{{n}_{1}}}}\left( f \right)$ and ${{C}_{{{n}_{2}}}}\left( f \right)$. For ${{n}_{1}}={{n}_{2}}=n$, the integral term $\int_{-\infty }^{+\infty }{f{{\left| {{C}_{n}}\left( f \right) \right|}^{2}}\mathrm{d}f}$ can be regarded as the mass center of the signal spectrum ${{\left| {{C}_{n}}\left( f \right) \right|}^{2}}$. Since such center can be shifted arbitrarily in the frequency domain, we can locate it at a specific point with zero value such that $(\ref{eq_33})$ still holds. Further, we have

\begin{align}
\label{eq_34}
&\int_{t_s}{\dot{c}_{{n}_{1}}^{*}\left( t-\tau  \right){\dot{c}_{{n}_{2}}}\left( t-\tau  \right)\mathrm{d}t} \nonumber\\\nonumber
&\approx{{\left( 2\pi  \right)}^{2}}\int_{-\infty }^{+\infty }{{f^{2}}S_{{n}_{1}}^{*}\left( f \right)S_{{n}_{2}}\left( f \right)\mathrm{d}f}\\
&=\left\{
\begin{aligned}
&{\left( 2\pi B \right)}^{2}, && {n}_{1}={n}_{2}  \\
&0, && {n}_{1}\ne {n}_{2}
\end{aligned}
\right.,
\quad n_{1,2}=1,...,N_c.
\end{align}

Similar with $(\ref{eq_33})$, for ${{n}_{1}}\ne {{n}_{2}}$, ${{C}_{{{n}_{1}}}}\left( f \right)$ is irrelevant with ${{C}_{{{n}_{2}}}}\left( f \right)$ so that $(\ref{eq_34})$ holds. For ${{n}_{1}}= {{n}_{2}}=n$, $(\ref{eq_34})$ follows the definition of $B=\left(\int_{-\infty }^{+\infty }{{{f}^{2}}{{\left| {{C}_{n}}\left( f \right) \right|}^{2}}\mathrm{d}f}\right)^{1/2}$.

% Generated by IEEEtran.bst, version: 1.14 (2015/08/26)

\end{document}